\documentclass[journal]{IEEEtran}

\usepackage{mathrsfs}
\usepackage{amsfonts,amsmath,amssymb}
\usepackage{dsfont}
\usepackage{cite}
\usepackage{graphicx}
 \usepackage{booktabs}

\usepackage{makecell}

\usepackage{caption}
\usepackage{bm}
\usepackage{bbding}
\usepackage{amsmath}
\usepackage{enumerate}
\usepackage{multirow}
\usepackage[ruled]{algorithm2e}
\usepackage{algorithmicx}
\usepackage{xcolor}
\usepackage{algpseudocode}
\usepackage{algcompatible}
\usepackage{subcaption}
 \captionsetup{font={footnotesize}, labelsep=period}

\usepackage{array}
\usepackage{slashbox}
\usepackage{CJK}
\usepackage{graphicx}

\usepackage{url}
\usepackage{amsthm}
\newtheorem{theorem}{Theorem}

 \newtheorem{lemma}{Lemma}

  \newtheorem{remark}{Remark}

  \newcommand{\figref}[1]{\figurename~\ref{#1}}

%%%%%%%%%%%%%%

\begin{document}
\title{Jittering Effects Analysis and Beam Training Design for UAV Millimeter Wave Communications}
\author{Wei~Wang,~\IEEEmembership{Member,~IEEE}, and~Wei~Zhang,~\IEEEmembership{Fellow,~IEEE}
\thanks{
W. Wang and W. Zhang are with School of Electrical Engineering and Telecommunications, The University of New South Wales, Sydney, Australia (e-mail: wei.wang@unsw.edu.au; w.zhang@unsw.edu.au).}
}

\maketitle

\begin{abstract}
Jittering effects significantly degrade the performance of  UAV millimeter-wave (mmWave) communications. To investigate the impacts of UAV jitter on mmWave communications, we firstly model UAV mmWave channel based on the geometric relationship between element antennas of the uniform planar arrays (UPAs). Then, we extract the relationship between (I) UAV attitude angles \& position coordinates and (II) angle of arrival (AoA)  \& angle of departure (AoD) of mmWave channel, and we also derive the distribution of AoA/AoD at UAV side from the random fluctuations of UAV attitude angles, i.e., UAV jitter. In beam training design, with the relationship between attitude angles and AoA/AoD, we propose to generate a rough estimate of AoA and AoD from UAV navigation information.  Finally, with the rough AoA/AoD estimate, we develop a compressed sensing (CS) based beam training scheme with constrained sensing range as the fine AoA/AoD estimation. Particularly, we construct a partially random sensing matrix to narrow down the sensing range of CS-based beam training. Numerical results show that our proposed UAV beam training scheme assisted by navigation information can achieve better accuracy  with reduced training length in AoA/AoD estimation and  is thus more suitable for UAV mmWave communications under jittering effects.
\end{abstract}

%\begin{keywords}
%UAV, jittering effects, mmWave communications, beam training, navigation, compressed sensing
%\end{keywords}

\maketitle

\section{Introduction}

Millimeter-wave (mmWave) communication, with multi-gigahertz bandwidth available, is a promising communication technology for the future wireless communications. However, signals over mmWave frequency band (30-300GHz) suffer from severe free-space propagation loss \cite{hemadeh2017millimeter, mmWaveRef, MammWave}. Massive multiple input multiple output (MIMO) and beamforming that focuses the wireless signal towards a specific direction are envisioned as an enabling technology for mmWave communications \cite{shen2015downlink, OptBeamPattern}. The shortcoming of beamforming in terrestrial mmWave communication network is its vulnerability to blockage \cite{Blockage}. Fortunately, when deployed in aerial platforms, e.g.,  unmanned aerial vehicle (UAV),  the mounted mmWave communication system can dynamically adjust its location in the three-dimensional space so as to guarantee line-of-sight (LoS) links and provide high-throughput services \cite{lin2018sky, zhang2019research, liu2019safeguarding, zhang2020multi}.  Owing to its mobility and high throughput,  UAV mmWave communications have been proposed as an on-demand solution for high-capacity wireless backhaul in cellular networks \cite{bertizzolo2019mmbac}.

As a type of small aircrafts, the key challenge in the use of UAVs is their high sensitivity to wind. Wind gust, a.k.a. turbulence, will result in brief changes of wind velocity from a steady value caused by changes in atmospheric pressure and temperature \cite{WindMea,moyano2013quadrotor}. Although estimation of wind disturbances has been extensively studied in flight control to mitigate wind effect \cite{arain2014real,perozzi2018wind},    UAV jitter/vibration, i.e., the unintended high-frequency change of UAV attitude/orientation, is still non-negligible, especially for multi-rotor UAVs. In addition, UAV jitter is also contributed by the high-frequency vibration from the propellers and rotors \cite{li2017development}, which is inherently unavoidable for small aircrafts. Jittering effect, i.e., the fast variation of UAV communication channels caused by UAV jitter, is a key character that make UAV communications significantly different from terrestrial communications. For wireless communication systems deployed in UAV platform, jittering effects could severely deteriorate the quality of wireless channel when carrier frequency is high, as the path variation caused by UAV jitter is significant compared with the wavelength \cite{banagar2020impact}. In \cite{yan2019comprehensive},  propagation attenuation,  shadowing, root mean square (RMS), and  delay spread (DS) of  UAV channel over lower frequency band, typically ranging from $0.8$GHz to $8$GHz, are studied. In \cite{dabiri2020analytical}, the impacts of UAV jitter on the  signal to interference noise ratio (SINR) of UAV mmWave channel are analytically studied. In \cite{sanchez2020robust}, experimental analysis of UAV hovering effects, which refer to the continuous motion around the target coordinates caused by inaccurate GPS signals, is performed for UAV mmWave communications at 60GHz. However, \cite{yan2019comprehensive,dabiri2020analytical, sanchez2020robust} mainly investigate the aggregate impacts of UAV jitter on the wireless channel, i.e., propagation attenuation, shadowing, RMS, DS, and SINR, while the spatial domain parameters, e.g., AoA and AoD, are more worthy of study in mmWave communications, as mmWave communications are characterized by large scale antenna array and directional transmission.

In \cite{xu2020multiuser, xu2018robust},  trajectory design and resource allocation are performed for UAV communications under the imperfect knowledge of AoA/AoD caused by jittering effects and the user location uncertainties caused by inaccurate GPS signal.  In \cite{wu2020secrecy, WuA2G2020}, the impacts of  of UAV jitter on UAV secure communications are modeled and analyzed, and, based on the proposed model, UAV jitters are exploited to enhance the secrecy performance. Although \cite{xu2020multiuser, xu2018robust,wu2020secrecy, WuA2G2020} have captured the characteristics of  jittering effects to wireless channels by modeling it as the small-scale fluctuations of AoA/AoD around the authentic value, the proposed models cannot reflect the connection between UAV jitter (variations of attitude/orientation angles) and the resulting variation of AoA/AoD. In \cite{yuan2020learning}, UAV jitter is assumed to impact merely the pitch angle of UAV, and the variation of pitch angle is further assumed to impact only the altitude angle (which is considered to be equal to both AoA and AoD of phased array in \cite{yuan2020learning}). Apparently, this assumption is over-simplified. In fact, due to the randomness of airflow and aircraft body vibration, the vibrations of UAV attitude are in all directions but merely differ in the scale at different angles \cite{wu2020secrecy}.

Understanding the relationship between UAV attitude angles (i.e., yaw, pitch, and roll) and AoA/AoD of mmWave channel is important.  UAV navigation system consists of global positioning system (GPS), barometer, and inertial measurement unit (IMU), e.g., gyroscope and accelerometer, which can provide estimation of  UAV attitude (enabled by gyroscope and accelerometer) and 3D UAV position (enabled by GPS and barometer)\cite{DJI, bertizzolo2019mmbac}. With the relationship between UAV attitude angles and AoA/AoD, the random variation of AoA/AoD caused by UAV jitter can be modeled and analyzed. In addition,  we can also use the attitude angles and 3D position  yielded by  navigation system to obtain a rough estimate of AoA/AoD as the prior knowledge for the subsequent beam training design, even though UAV jitter will inevitably affect the accuracy of attitude angle measurement, and the 3-D position provided by civilian GPS and barometer is inherently not accurate enough. It is noteworthy that, as an electronic device,  the response time of mmWave communication systems is in a much smaller order of magnitude than  the inertial units. Thus, it is possible for the variation of AoA/AoD to be tracked and then compensated by an efficient and fast beam training design.

In this paper, we propose to analyze the jittering effects to UAV mmWave channel response and  customize a beam training design for UAV mmWave communications under jittering effects.  To build the connection between UAV jitter (the fluctuation of  attitude angles) and AoA/AoD fluctuation, we firstly extract the relationship between  AoA \& AoD and UAV attitude angles through geometric channel modeling, and based on which we have made the following contributions:
\begin{itemize}
  \item We analyze the jittering effects to UAV mmWave channel response and derive the distribution of the AoA/AoD at UAV side under jittering effects, and we find that the degree of the  AoA/AoD fluctuation at UAV side depends on both UAV position and the desired UAV attitude. To our best knowledge, this is the first work that analytically builds the connection between UAV jitter (i.e., attitude angles fluctuations) and its effects on mmWave channel (i.e., AoA/AoD fluctuations).
  \item Based on the relationship between AoA \& AoD and UAV attitude \& position, we study the impacts of UAV attitude estimation error and UAV position estimation error on navigation-system-yielded AoA \& AoD, and we find that the estimation error of UAV position brings about negligible AoA/AoD error at BS side, but the estimation error of UAV attitude results in significant  AoA/AoD error at UAV side.
  \item We propose a direction-constrained CS-based beam training scheme  to refine the rough AoA/AoD estimate yielded by  navigation system. Unlike conventional channel measurement using omnidirectional random sensing matrix, we develop a novel method to generate the partially random sensing matrix with direction-constrained sensing range.
\end{itemize}
Numerical results show that our proposed UAV beam training scheme assisted by navigation information can achieve better accuracy  with reduced training length in AoA/AoD estimation and  is thus more suitable for UAV mmWave communications under jittering effects

The rest of the paper is organized as follows. In Section II, we model UAV mmWave channel with jittering effects.  In Section III, we analyze the jittering effects to UAV mmWave channel response. In Section IV,
we propose a direction-constrained beam training scheme that is assisted by UAV navigation information. In Section V, numerical results are presented. Finally, in Section VI, we draw the conclusion.

{\em{Notations:\quad}} Column vectors (matrices) are denoted by bold-face lower (upper) case letters, $\mathbf{x}(n)$ denotes the $n$-th element in the vector $\mathbf{x}$, $(\cdot)^*$, $(\cdot)^T$ and $(\cdot)^{H}$  represent conjugate, transpose and  conjugate transpose operation, respectively.
$\mathbf{A}(i,j)$ refers to the $(i,j)$-th entry of the matrix $\mathbf{A}$.
Subtraction and addition of the  AoAs/AoDs are defined as $\Psi \ominus \Omega  \triangleq (\Psi - \Omega+1) \mod 2 -1$ and $\Psi \oplus \Omega  \triangleq (\Psi + \Omega+1) \mod 2 -1$ to guarantee the result is within the range $[-1,1)$, $\otimes$ denotes Kronecker product operation. In addition, some of the key parameters are listed in Table \ref{Parameters}.

\begin{table*}
 \centering
 \begin{tabular}{ c|c }
   \hline
   % after \\: \hline or \cline{col1-col2} \cline{col3-col4} ...
  Symbol & Parameter \\
  \hline
  \hline
  $\mathbf{p}_{B}$ & The (absolute) position of BS \\
  \hline
  $\mathbf{p}_{B, \iota}$ & The (absolute) position of the $\iota$-th element antenna in BS \\
  \hline
  $\mathbf{a}_{B, \iota}$ & The relative position of the $\iota$-th element antenna in BS \\
  \hline
  $\mathbf{p}_{U}$ & The (absolute) position of UAV \\
  \hline
 $\mathbf{p}_{B,  \kappa}$ & The (absolute) position of the $\kappa$-th element antenna in UAV \\
  \hline
 $\mathbf{a}_{U, \kappa}$ & The relative position of   the $\kappa$-th element antenna in UAV \\
 \hline
 $\mathbf{R}$ & The rotation matrix of UAV \\
 \hline
  $\alpha$, $\beta$, $\gamma$ & Yaw angle, pitch angle,  roll angle \\
  \hline
  \makecell[c]{$\mathbf{R}_{Yaw}(\alpha)$,  $ \mathbf{R}_{Pitch}(\beta)$, \\$\mathbf{R}_{Roll}(\gamma)$}  & The elemental rotation matrices about yaw angle, pitch angle and roll angle \\
  \hline
  $N_{B}$ & The number of element antennas of the UPA at BS side\\
  \hline
  $N_{B,x}$, $N_{B,z}$ & The number of element antennas along x-axis and z-axis in the UPA at BS sider\\
  \hline
  $N_{U}$ & The number of element antennas of the UPA at UAV side\\
  \hline
  $N_{U,x}$, $N_{U,y}$ & The number of element antennas along x-axis and y-axis in the UPA at UAV sider\\
  \hline
     $h_{\kappa, \iota}$ & \makecell[c]{The complex coefficient of the LoS path between the $\kappa$-th element antenna \\  at  UAV side and the $\iota$-th element antenna at BS side }  \\
   \hline
    $d_{BU}$  &   The distance between BS and UAV \\
   \hline
     $\mathbf{e}_{BU}$ & The direction vector between BS and UAV \\
  \hline
  $\phi$, $\theta$ & Elevation angle and  azimuth angle in spherical coordinate system\\
   \hline
    $c$, $f_c$, $\lambda $ & Speed of light, carrier frequency, and signal wavelength \\
   \hline
    $d_{\kappa, \iota}$ &  The  distance between the  $\kappa$-th antenna of UAV and the $\iota$-th antenna of BS\\
   \hline
     $\mathbf{H}$ & The channel response matrix between the UPA at BS side the UPA at UAV side\\
    \hline
     $\mathbf{v}_B $ & The array response vectors of the UPA at BS side \\
    \hline
    $\mathbf{v}_U $ & The array response vectors of the UPA at UAV side\\
    \hline
     $(\Psi_B $, $\Omega_B)$ & The two-dimensional cosine  AoA/AoD of the UPA at BS side\\
    \hline
    $(\Psi_U $, $\Omega_U)$ & The two-dimensional cosine  AoA/AoD of the UPA at UAV side\\
     \hline
     $(\hat\Psi_B $, $\hat\Omega_B)$ & Navigation-system-yielded two-dimensional cosine  AoA/AoD of the UPA at BS side\\
    \hline
    $(\hat\Psi_U $, $\hat\Omega_U)$ & Navigation-system-yielded two-dimensional cosine  AoA/AoD of the UPA at UAV side\\
    \hline
    $(\tilde\Psi_U $, $\tilde\Omega_U)$ & Beam-training-yielded  two-dimensional cosine AoA/AoD of the UPA at UAV side\\
    \hline
      $ \mathbf{m}_{U}$  & The receive beamforming vector at UAV side \\
    \hline
   $\mathbf{f}_B$ & The transmit beamforming vector at BS side \\
    \hline
     $\varphi_{n_U}$ &The phase coefficient of the $n_{U}$-th element antenna of the UPA at UAV side\\
               \hline
     $\zeta_{n_a}$ & \makecell[c]{The center angle of the $n_a$-th sub-array configured in the generation  \\ algorithm of sub-array based direction-constrained random sensing matrix}\\
   \hline
 \end{tabular}
 \caption{List of the key parameters used in this paper} \label{Parameters}
\end{table*}

\section{Geometric Channel Modeling for UAV MmWave Communications}
In this section, geometric channel modeling is performed for UAV mmWave channel  to extract the relationship between UAV attitude $\&$ position and AoA/AoD.
\subsection{Antenna Configuration}
\begin{figure}[tp]{
\begin{center}{\includegraphics[width=5.5cm ]{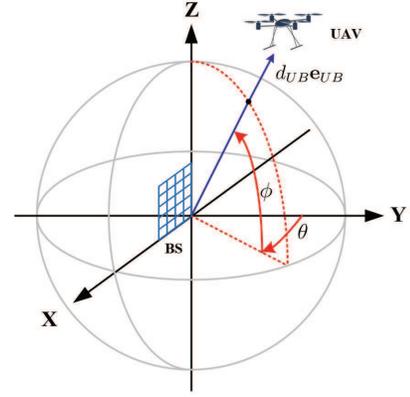}}
\caption{The link between UAV and BS, where $\phi$ is elevation angle, $\theta$ is azimuth angle}\label{UPABS}
\end{center}}
\end{figure}

We assume that a UPA with $N_{B}= N_{B,x} \times N_{B,z}$ element antennas is vertically mounted on the 2-D plane $y=0$ at BS side, which is shown in \figref{UPABS}. We denote the location of BS  as $\mathbf{p}_B = (0, 0, 0)^T$,   and thus the location of the $i$-th element antenna at BS side is represented as
\begin{align}
\mathbf{p}_{B, \iota} = \mathbf{p}_{B} + \mathbf{a}_{B,\iota}
\end{align}
where $\mathbf{a}_{B,\iota}$ is the relative position of the $\iota$-th element antenna with reference to the $1$-st element antenna (reference antenna)
\begin{align} \label{VecA}
\mathbf{a}_{B,\iota} = \frac{\lambda}{2}[x_{\iota}, 0, z_{\iota}]^T
\end{align}
where $\lambda$ is the wavelength of the carrier, $\frac{\lambda}{2}$  is the inter antenna spacing, the antenna indexes $x_\iota \in \{0, 1, \cdots, N_{B,x}-1\}$ and $z_\iota \in \{0, 1, \cdots, N_{B,z}-1 \}$, and the relative position of the reference antenna is $\mathbf{a}_{B,1} = [0, 0, 0]^T$.

Similarly, we assume that there is a UPA with $N_{U}= N_{U,x} \times N_{U,y}$ element antennas  mounted on the bottom of UAV, which is shown in \figref{UAVplatform}.
 We denote the location of UAV as $\mathbf{p}_U$, and thus the location of the $\kappa$-th element antenna at UAV side is
\begin{align}
\mathbf{p}_{U,\kappa} = \mathbf{p}_{U} + \underbrace{\mathbf{R}\mathbf{a}_{U,\kappa}}_{\bar{\mathbf{a}}_{U,\kappa}}
\end{align}
where $\mathbf{R}$  is the rotation matrix that depicts the attitude of UAV,

\begin{figure}[h]{
\begin{center}{\includegraphics[width=5cm ]{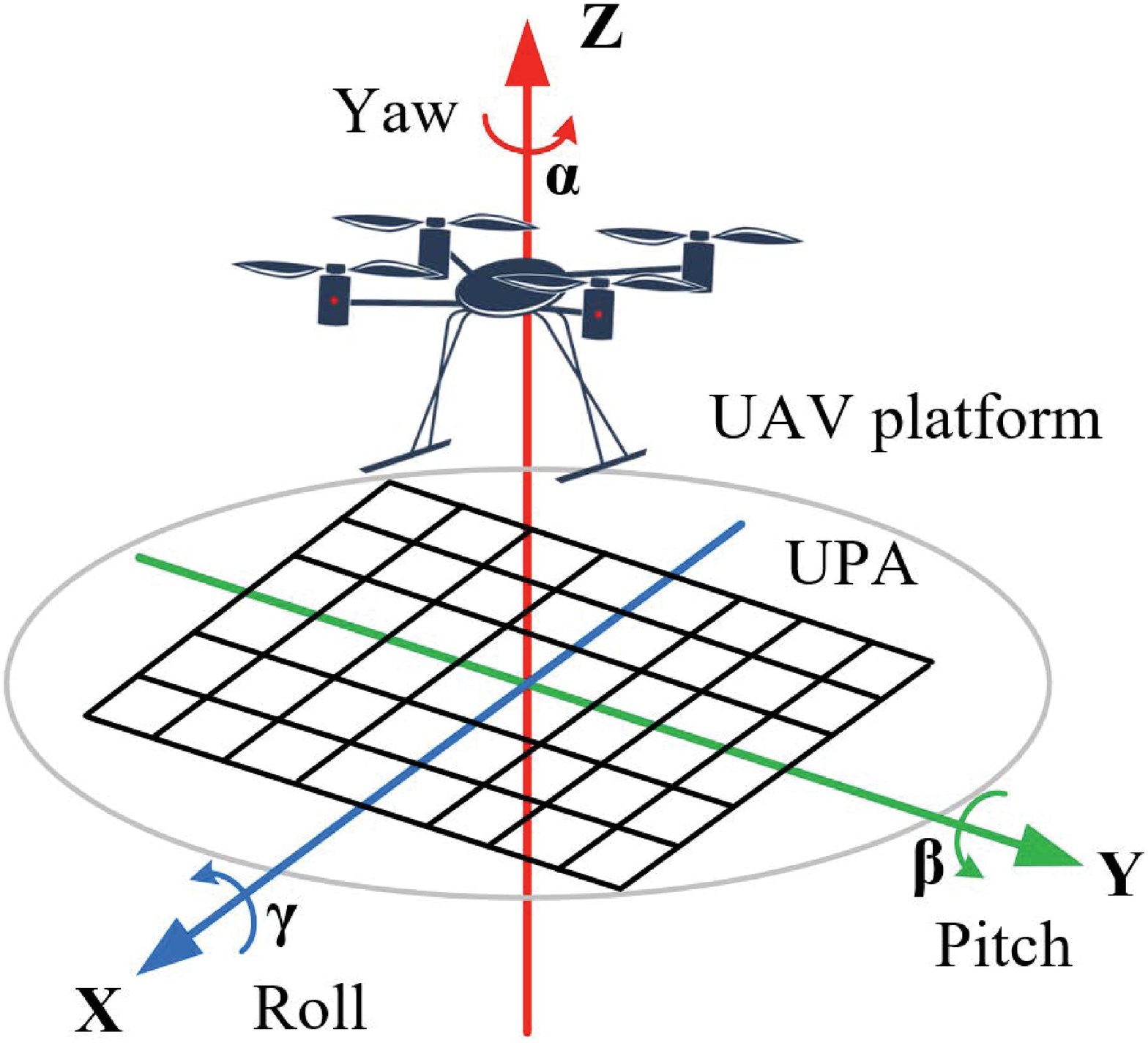}}
\caption{Yaw, pitch and roll rotations of a quadcopter UAV.  }\label{UAVplatform}
\end{center}}
\end{figure}

Note that the rotation matrix $\mathbf{R}$ is characterized by Euler angles, i.e., yaw, pitch and roll \cite{zhang2018measurement}, which is shown in \figref{UAVplatform}, and it is  written by
\begin{align}
\mathbf{R} =  \mathbf{R}_{Yaw}(\alpha) \mathbf{R}_{Pitch}(\beta) \mathbf{R}_{Roll}(\gamma)
\end{align}
where
\begin{align}
\mathbf{R}_{Yaw}(\alpha) &= \left(
                         \begin{array}{ccc}
                           \cos \alpha & -\sin \alpha & 0 \\
                           \sin \alpha & \cos \alpha & 0\\
                           0 & 0 & 1 \\
                         \end{array}
                       \right) \notag \\
\mathbf{R}_{Pitch}(\beta) &= \left(
                         \begin{array}{ccc}
                           \cos \beta & 0 & \sin \beta \\
                           0 & 1 & 0\\
                           -\sin \beta & 0 & \cos \beta \\
                         \end{array}
                       \right) \notag \\
\mathbf{R}_{Roll}(\gamma) &= \left(
                         \begin{array}{ccc}
                           1 & 0 & 0 \\
                           0 & \cos \gamma & -\sin\gamma\\
                           0 & \sin\gamma & \cos \gamma \\
                         \end{array}
                       \right) \notag
\end{align}
and $\mathbf{R}$ can be further expressed as \eqref{RotMatrix}.

\begin{figure*}
\begin{align} \label{RotMatrix}
\mathbf{R} =  \left(
                \begin{array}{ccc}
                  \cos \alpha \cos \beta & \cos \alpha \sin \beta \sin \gamma - \sin \alpha \cos \gamma & \cos\alpha\sin\beta\cos\gamma + \sin\alpha\sin\gamma \\
                  \sin\alpha \cos \beta & \sin\alpha\sin\beta\sin\gamma + \cos\alpha\cos\gamma & \sin\alpha\sin\beta\cos\gamma-\cos\alpha\sin\gamma \\
                  -\sin\beta & \cos\beta\sin\gamma & \cos\beta\cos\gamma \\
                \end{array}
              \right)
\end{align}
\hrulefill
\end{figure*}

$\mathbf{a}_{U,\kappa}$ is the relative position of the $\kappa$-th element antenna with reference to the $1$-st element antenna (reference antenna) when UPA is on the 2-D plane $z=0$, which is represented as
\begin{align} \label{VecA}
\mathbf{a}_{U,\kappa} = \frac{\lambda}{2}\left[x_\kappa-\frac{N_{U,x}-1}{2}, y_\kappa - \frac{N_{U,y}-1}{2}, 0\right]^T
\end{align}
where the antenna indexes $x_\kappa \in \{0, 1, \cdots, N_{U,x}-1\}$ and $y_\kappa \in \{0, 1, \cdots, N_{U,y}-1 \}$, and the relative position of the reference antenna is $\mathbf{a}_{U,1} = \frac{\lambda}{2}[ -\frac{N_{U,x}-1}{2}, - \frac{N_{U,y}-1}{2}, 0]^T$.

\subsection{Geometric Channel Modeling }

The channel response for mmWave MIMO with array antenna can be depicted using the geometry dependent model \cite{el2014spatially, hemadeh2017millimeter}, in which each propagation path is characterized by the array shape of Tx/Rx antenna and the spatial domain parameters, i.e., AoA/AoD. However, the exact relationship between AoA/AoD and the placement of antenna arrays (i.e., relative position and antenna orientation)  is rarely explored in wireless communication systems. The relationship is the prerequisite for modeling UAV jittering effects on mmWave channel.  In order to unravel the relationship, we start channel modeling from the element antennas at BS side and UAV side. According to the free-space path loss model in Eq. (2.7) and Eq. (2.8) of  \cite{goldsmith2005wireless}, the complex coefficient of the LoS path between the $\kappa$-th element antenna in  UAV side and the $\iota$-th element antenna at BS side is expressed as
\begin{align} \label{CHsingle}
h_{\kappa, \iota}=  \frac{\lambda  }{4\pi d_{\kappa, \iota}} e^{j\frac{-2\pi f_c d_{\kappa, \iota}}{c}} =   \frac{\lambda  }{4\pi d_{\kappa, \iota}}  e^{j\frac{-2\pi d_{\kappa, \iota}}{\lambda }}
\end{align}
where $c$ is the speed of light, $f_c$ is the carrier frequency, $\lambda = \frac{c}{f_c} $ is the signal wavelength and $d_{\kappa, \iota}$ is the distance between the  $\kappa$-th antenna of UAV and the $\iota$-th antenna of BS, which is represented as
\begin{align}
d_{\kappa, \iota} &=   \|   \mathbf{p}_{B,\iota} - \mathbf{p}_{U,\kappa} \|_2  = \|\mathbf{p}_B + \mathbf{a}_{B,\iota} -\mathbf{p}_U - \bar{\mathbf{a}}_{U,\kappa} \|_2
\end{align}
In \eqref{CHsingle}, the term $ \frac{1  }{d_{\kappa, \iota}} $ determines the magnitude of $h_{\kappa, \iota}$, and the term $d_{\kappa, \iota}$ determines the phase
of $h_{\kappa, \iota}$.

Through Taylor series expansion, we obtain the approximations of $\frac{1}{d_{\kappa, \iota}}$ and $d_{\kappa, \iota}$ as follows.
\begin{subequations}
\begin{align}
\frac{1}{d_{\kappa, \iota} } \approx & \frac{1}{\| \mathbf{p}_{B} - \mathbf{p}_{U}\|_2} \label{Approx1}\\
d_{\kappa, \iota}  \approx & \|\mathbf{p}_{B} -\mathbf{p}_{U}  \|_2 + (\mathbf{a}_{B, \iota} - \bar{\mathbf{a}}_{U,\kappa})^T  \frac{ \mathbf{p}_{B}-\mathbf{p}_{U}}{\|\mathbf{p}_{B} -  \mathbf{p}_{U} \|_2}  \label{Approx2}
\end{align}\label{Approx}
\end{subequations}
In \eqref{Approx1}, only the zeroth order component is conserved because $\mathbf{a}_{B,\iota} -\bar{\mathbf{a}}_{U,\kappa}$ is of millimeter-level and by contrast $ \mathbf{p}_{B} -  \mathbf{p}_{U} $ is of kilometer-level; In \eqref{Approx2},  as $\frac{d_{\kappa,\iota}}{\lambda}$ determines phase of $h_{\kappa, \iota}$, both zeroth and first order components are conserved, and second and higher order components are omitted due to their negligible impact on phase.

Substituting \eqref{Approx} into \eqref{CHsingle}, we have
\begin{align} \label{CHsingle2}
h_{\kappa, \iota} \approx  \frac{\lambda  }{4\pi d_{BU}}   e^{j\frac{-2\pi  d_{BU}  }{\lambda }} e^{j  \frac{-2\pi(\mathbf{a}_{B, \iota} - \bar{\mathbf{a}}_{U,\kappa})^T   \mathbf{e}_{BU}}{\lambda} }
\end{align}
where
\begin{subequations}
\begin{align}
d_{BU} &\triangleq \| \mathbf{p}_{B} - \mathbf{p}_{U} \|_2 \label{DisEq}\\
\mathbf{e}_{BU} &\triangleq \frac{\mathbf{p}_{B} - \mathbf{p}_{U}}{\|\mathbf{p}_{B} - \mathbf{p}_{U}\|_2} \label{DirectionEq}
\end{align}
\end{subequations}
Using spherical coordinates, the direction vector $\mathbf{e}_{BU}$ can be represented  as
\begin{align}
\mathbf{e}_{BU} = [\cos\phi\cos\theta,\; \cos\phi\sin\theta,\; \sin\phi]^T \label{ebu}
\end{align}
where $\phi$ is elevation angle and $\theta$ is azimuth angle of the LoS path.

Owing to the poor scattering property of mmWave frequency band \cite{ChannelTracking, azari2017ultra}, LoS propagation dominates  mmWave communications, and it is especially true for the link between UAV and BS. Following \cite{xu2020multiuser, xu2018robust, dabiri2020analytical}, we assume that the channel of UAV-to-BS mmWave communications is LoS-channel. Thus, the channel matrix $\mathbf{H}$ between BS and UAV is written as
\begin{align}
\mathbf{H} =& \left(
                      \begin{array}{cccc}
                        h_{1,1} &  h_{1,2} & \cdots &  h_{1,N_B} \\
                        h_{2,1} & h_{2,2} & \cdots & h_{2,N_B} \\
                        \vdots & \vdots & \ddots & \vdots \\
                        h_{N_U,1} & h_{N_U,2} & \cdots & h_{N_U,N_B} \\
                      \end{array}
 \right) \notag \\
\approx &  \frac{\lambda  }{4\pi d_{BU}}   e^{j\frac{-2\pi  d_{BU}  }{\lambda }}\left(
                      \begin{array}{cccc}
                        v_{1,1} &  v_{1,2} & \cdots &  v_{1,N_B} \\
                        v_{2,1} & v_{2,2} & \cdots & v_{2,N_B} \\
                        \vdots & \vdots & \ddots & \vdots \\
                        v_{N_U,1} & v_{N_U,2} & \cdots & v_{N_U,N_B} \\
                      \end{array}
 \right) \label{Hmatrix}
\end{align}
where $ v_{\kappa, \iota} = e^{j  \frac{-2\pi(  \mathbf{a}_{B, \iota} - \bar{\mathbf{a}}_{U,\kappa} )^T   \mathbf{e}_{BU}}{\lambda} }$. In accordance to the geometry dependent channel model in [1], [28],  Eq. \eqref{Hmatrix} can be written as
\begin{align}
\mathbf{H} \approx &  \frac{\lambda  }{4\pi d_{BU}}   e^{j\frac{-2\pi  d_{BU}  }{\lambda }} \mathbf{v}_U \mathbf{v}_B^H \label{HmatrixApprox}
\end{align}
where the array response vectors $\mathbf{v}_B$ and $\mathbf{v}_U$ are represented as
\begin{subequations}
\begin{align}
\mathbf{v}_B  = \left[e^{j  \frac{2\pi\mathbf{a}_{B, 1}^T   \mathbf{e}_{BU}}{\lambda} }, e^{j  \frac{2\pi\mathbf{a}_{B, 2}^T   \mathbf{e}_{BU}}{\lambda} },\cdots, e^{j  \frac{2\pi\mathbf{a}_{B, N_B}^T   \mathbf{e}_{BU}}{\lambda} } \right]^T   \label{Steering1}\\
\mathbf{v}_U  = \left[e^{j  \frac{2\pi \bar{\mathbf{a}}_{U, 1}^T   \mathbf{e}_{BU}}{\lambda} }, e^{j  \frac{2\pi \bar{\mathbf{a}}_{U, 2}^T   \mathbf{e}_{BU}}{\lambda} },\cdots, e^{j  \frac{2\pi \bar{\mathbf{a}}_{U, N_U}^T   \mathbf{e}_{BU}}{\lambda} } \right]^T \label{Steering2}
\end{align}
\end{subequations}

To parameterize the array response vectors, we decompose them as follows. \\
(1) \underline{Decomposition of the array response vector $\mathbf{v}_B $ at BS side.}
\vspace{0.1cm}

In \eqref{Steering1}, $\mathbf{a}_{B, \iota}^T \mathbf{e}_{BU}$ can be represented as
\begin{align}
\mathbf{a}_{B, \iota}^T \mathbf{e}_{BU} & = \frac{\lambda}{2}\left( x_\iota[1, 0, 0] \mathbf{e}_{BU}   + z_\iota [0, 0, 1] \mathbf{e}_{BU}   \right) \notag \\
& =  \frac{\lambda}{2}\left( x_\iota \Psi_B   + z_\iota  \Omega_B   \right)
\end{align}
where the two-dimensional cosine AoA/AoD of UPA at BS side is
\begin{subequations}
\begin{align}
\Psi_B & \triangleq [1, 0, 0] \mathbf{e}_{BU} = \mathbf{e}_{BU}(1)  = \cos\phi\cos\theta   \\
\Omega_B & \triangleq   [0, 0, 1] \mathbf{e}_{BU} = \mathbf{e}_{BU}(3) = \sin\phi
\end{align} \label{BSside}
\end{subequations}
\noindent Note that $\psi_B  \triangleq \cos^{-1}(\Psi_B)$ represents the AoA/AoD between $x$ axis of UPA at BS side (namely $[1, 0, 0]^T$) and the direction from UAV to BS (namely $\mathbf{e}_{BU}$), and  $\omega_B \triangleq \cos^{-1}(\Omega_B)$  represents the AoA/AoD between $z$ axis of UPA at BS side (namely $[0, 0, 1]^T$ ) and the direction from UAV to BS (namely $\mathbf{e}_{BU}$). To simplify expression, we directly name $(\Psi_B, \Omega_B)$ as two-dimensional AoA/AoD by abuse of notation hereinafter.

With $(\Psi_B, \Omega_B)$, the array response vector $\mathbf{v}_B $ can be further decomposed as
\begin{align}
\mathbf{v}_B = \mathbf{v}(\Psi_B, N_{B,x}) \otimes  \mathbf{v}(\Omega_B, N_{B,z}) \label{SteeringVB}
\end{align}
where
\begin{subequations}
\begin{align}
\mathbf{v}(\Psi_B, N_{B,x}) &= \left[1,\; e^{j  \pi \Psi_B},\; \cdots,\; e^{j  (N_{B,x} - 1)\pi \Psi_B} \right]^T \\
 \mathbf{v}(\Omega_B, N_{B,z}) & = \left[1,\; e^{j  \pi \Omega_B},\; \cdots,\; e^{j  (N_{B,z} - 1)\pi \Omega_B} \right]^T
\end{align}
\end{subequations}

\noindent (2) \underline{Decomposition of the array response vector $\mathbf{v}_U $ at UAV side.}
\vspace{0.1cm}

In \eqref{Steering2},  $\bar{\mathbf{a}}_{U, \kappa}^T   \mathbf{e}_{BU}$ can be represented as
\begin{align}
& \bar{\mathbf{a}}_{U, \kappa}^T   \mathbf{e}_{BU} =   \mathbf{a}_{U, \kappa}^T \mathbf{R}^T \mathbf{e}_{BU} \notag \\
 = &\frac{\lambda}{2}\left[x_\kappa-\frac{N_{U,x}-1}{2}, y_\kappa - \frac{N_{U,y}-1}{2}, 0\right]  \mathbf{R}^T \mathbf{e}_{BU} \notag \\
= &\frac{\lambda}{2}\left( x_\kappa[1, 0, 0] \mathbf{R}^T\mathbf{e}_{BU}   + y_\kappa [0, 1, 0] \mathbf{R}^T\mathbf{e}_{BU}   \right) \notag \\
& \qquad  -  \frac{\lambda}{2} \left[\frac{N_{U,x}-1}{2}, \frac{N_{U,y}-1}{2}, 0\right] \mathbf{R}^T\mathbf{e}_{BU}  \label{EqUAV}
\end{align}
Let
\begin{subequations}
\begin{align}  \label{psivarphi}
\Psi_U & \triangleq   [1, 0, 0] \mathbf{R}^T  \mathbf{e}_{BU}  \\
\Omega_U & \triangleq    [0, 1, 0] \mathbf{R}^T \mathbf{e}_{BU}
\end{align}
\end{subequations}
Then, \eqref{EqUAV} is rewritten as
\begin{align}
 & \quad \bar{\mathbf{a}}_{U, \kappa}^T   \mathbf{e}_{BU}  \notag \\
 = & \frac{\lambda}{2}(x_\kappa \Psi_U + y_\kappa \Omega_U) -  \frac{\lambda}{2}\left(\frac{N_{U,x}-1}{2} \Psi_U +  \frac{N_{U,y}-1}{2}\Omega_U\right)
\end{align}
Therefore, the array response vector $\mathbf{v}_U $ can be decomposed as
\begin{align}
&\qquad \mathbf{v}_U = \mathbf{v}(\Psi_U, N_{U,x}) \otimes  \mathbf{v}(\Omega_U, N_{U,y}) \label{SteeringVU}
\end{align}
where
\begin{subequations}
\begin{align}
&\quad  \mathbf{v}(\Psi_U, N_{U,x}) \notag \\
=& e^{-j \left(\frac{N_{U,x}-1}{2}\pi \Psi_U \right) }   \left[1,\; e^{j  \pi \Psi_U},\; \cdots,\; e^{j  (N_{U,x} - 1)\pi \Psi_U} \right]^T \label{UavULA1}\\
 &\quad  \mathbf{v}(\Omega_U, N_{U,y})  \notag \\
= &  e^{-j \left(\frac{N_{U,y}-1}{2}\pi \Omega_U \right) }  \left[1,\; e^{j  \pi \Omega_U},\; \cdots,\; e^{j  (N_{U,y} - 1)\pi \Omega_U} \right]^T \label{UavULA2}
\end{align}
\end{subequations}

\begin{theorem}
The approximation of the channel response in \eqref{HmatrixApprox} is written as
\begin{align} \label{CHresponse}
\mathbf{H} \approx &  \frac{\lambda}{4\pi d_{BU}}   e^{j\frac{-2\pi  d_{BU}  }{\lambda }} \Big(  \mathbf{v}(\Psi_U, N_{U,x}) \otimes  \mathbf{v}(\Omega_U, N_{U,y}) \Big)  \notag \\
 & \cdot \Big( \mathbf{v}(\Psi_B, N_{B,x})\otimes  \mathbf{v}(\Omega_B, N_{B,z})  \Big)^H
\end{align}
where the expressions of  $\Psi_B, \Omega_B$  are given by \eqref{BSside} and the expressions of $\Psi_U, \Omega_U$ are obtained as follows by
substituting \eqref{ebu} and \eqref{RotMatrix} into \eqref{psivarphi}
\begin{subequations}
\begin{align}
& \Psi_U =   \cos \alpha \cos \beta  \cos\phi\cos\theta + \sin \alpha \cos\beta  \cos\phi\sin\theta \notag \\
& \quad \quad - \sin \beta \sin\phi   \\
& \Omega_U =  (\cos \alpha \sin \beta \sin \gamma - \sin\alpha \cos\gamma) \cos\phi\cos\theta \notag \\
&  +(\sin\alpha\sin\beta\sin\gamma + \cos\alpha\cos\gamma )\cos\phi\sin\theta  +  \cos\beta \sin \gamma \sin\phi
\end{align} \label{AngleUAV}
\end{subequations}
\noindent
\end{theorem}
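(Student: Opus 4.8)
\emph{Proof proposal.} The plan is to obtain the stated form in two stages: first assemble the rank-one approximation already in hand into the doubly-Kronecker form, and then evaluate the UAV-side direction cosines by an explicit matrix--vector product. For the first stage I would start from \eqref{HmatrixApprox}, $\mathbf{H} \approx \frac{\lambda}{4\pi d_{BU}} e^{-j 2\pi d_{BU}/\lambda}\, \mathbf{v}_U \mathbf{v}_B^H$, and substitute the per-axis decompositions \eqref{SteeringVB} and \eqref{SteeringVU}, namely $\mathbf{v}_B = \mathbf{v}(\Psi_B, N_{B,x}) \otimes \mathbf{v}(\Omega_B, N_{B,z})$ and $\mathbf{v}_U = \mathbf{v}(\Psi_U, N_{U,x}) \otimes \mathbf{v}(\Omega_U, N_{U,y})$. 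Since the scalar prefactor pulls out of the outer product, this substitution immediately yields \eqref{CHresponse}; the array-centering phases $e^{-j\frac{N_{U,x}-1}{2}\pi\Psi_U}$ and $e^{-j\frac{N_{U,y}-1}{2}\pi\Omega_U}$ carried inside $\mathbf{v}(\Psi_U, N_{U,x})$ and $\mathbf{v}(\Omega_U, N_{U,y})$ (coming from the symmetric indexing in \eqref{VecA} and \eqref{UavULA1}--\eqref{UavULA2}) are simply retained as written, so no further manipulation is required here.

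For the second stage I would compute $\Psi_U$ and $\Omega_U$ from their definitions in \eqref{psivarphi}. Observe that $[1,0,0]\mathbf{R}^T\mathbf{e}_{BU}$ is the first entry of the vector $\mathbf{R}^T\mathbf{e}_{BU}$, and the first row of $\mathbf{R}^T$ is the first column of $\mathbf{R}$ in \eqref{RotMatrix}; hence $\Psi_U = \mathbf{r}_1^T\mathbf{e}_{BU}$ with $\mathbf{r}_1 = (\cos\alpha\cos\beta,\ \sin\alpha\cos\beta,\ -\sin\beta)^T$, and likewise $\Omega_U = \mathbf{r}_2^T\mathbf{e}_{BU}$ with $\mathbf{r}_2$ the second column of $\mathbf{R}$. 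Substituting $\mathbf{e}_{BU} = (\cos\phi\cos\theta,\ \cos\phi\sin\theta,\ \sin\phi)^T$ from \eqref{ebu} and expanding the two inner products term by term reproduces exactly \eqref{AngleUAV}. The BS-side cosines \eqref{BSside} need no analogous step because the BS array is fixed (no rotation is applied), so $\Psi_B$ and $\Omega_B$ are read directly off the first and third components of $\mathbf{e}_{BU}$.

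I do not anticipate a genuine obstacle: every ingredient --- the Fraunhofer-type Taylor approximation \eqref{Approx}, the outer-product form \eqref{HmatrixApprox}, and the per-axis Kronecker splittings \eqref{SteeringVB} and \eqref{SteeringVU} --- has already been established in the preceding derivation, so the theorem is essentially a consolidation of these plus one explicit $3\times 3$ matrix--vector multiplication. The only points requiring care are bookkeeping: keeping the Kronecker-product ordering consistent between the $x$- and $z$-axes at the BS and the $x$- and $y$-axes at the UAV, transposing $\mathbf{R}$ correctly when extracting $\Psi_U$ and $\Omega_U$ (so that rows of $\mathbf{R}^T$ are read as columns of $\mathbf{R}$), and not discarding the half-array phase offsets from the symmetric antenna indexing. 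None of these affects the structure of the result.
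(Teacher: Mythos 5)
Your proposal is correct and follows exactly the route the paper itself takes: the theorem is a consolidation of the Kronecker decompositions \eqref{SteeringVB} and \eqref{SteeringVU} substituted into \eqref{HmatrixApprox}, together with reading $\Psi_U$ and $\Omega_U$ off as inner products of the first and second columns of $\mathbf{R}$ with $\mathbf{e}_{BU}$. Your bookkeeping cautions (the half-array centering phases, the $\mathbf{R}^T$ row/column identification, and the Kronecker ordering) are precisely the only nontrivial points, and you handle them correctly.
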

From \eqref{BSside} and \eqref{AngleUAV}, it can be seen that $(\Psi_B, \Omega_B)$ is only determined by UAV position, and $(\Psi_U, \Omega_U)$ is determined by both UAV position and UAV attitude. For illustrative purposes, the dependency of the two-dimensional AoA/AoD on UAV position and UAV attitude is shown in Table. \ref{angles}.

The derived channel response in \eqref{CHresponse} is in accordance with the widely used far-field mmWave channel model for UPA \cite{tsai2018millimeter,wang2017antenna}, which is characterized by the two dimensional AoA/AoD $(\Psi_B, \Omega_B)$ at BS side and $(\Psi_U, \Omega_U)$ at UAV side. A notable difference from the existing work, which simply treat the two dimensional  AoA/AoD as unknown variables to be estimated, is that \eqref{AngleUAV} extracts the relationship between the two dimensional AoA/AoD $(\Psi_B, \Omega_B)$ (or $(\Psi_U, \Omega_U)$) and UAV position \& attitude related angles.

\begin{table}
 \centering
 \begin{tabular}{|c|c|c|}
   \hline \hline
  \makecell[c]{Two-dimensional \\ AoA/AoD} &\makecell[c]{UAV position related \\ angles (i.e., $\phi$, $\theta$)    } & \makecell[c]{UAV attitude related \\ angles (i.e., $\alpha$, $\beta$, $\gamma$ )    } \\
  \hline
  ($\Psi_B, \Omega_B$)  &\makecell[c]{Dependent  }  & \makecell[c]{Independent} \\
  \hline
  ($\Psi_U, \Omega_U$) &\makecell[c]{Dependent  }  & \makecell[c]{Dependent} \\
   \hline \hline
 \end{tabular}
 \caption{Dependency of the two-dimensional AoA/AoD (BS side and UAV side) on UAV position and UAV attitude} \label{angles}
\end{table}

\section{UAV Jittering Effects Analysis}

In this section, we investigate the effects of UAV jitter on mmWave channel response.  As indicated by Theorem 1 that the jittering effects merely work on the channel parameters at UAV side, we will thus primarily focus on the analysis of the two-dimensional AoA/AoD $\Psi_U, \Omega_U$.

\subsection{The physical meaning of $\Psi_U, \Omega_U$}
To obtain  insights into jittering effects, we firstly explore the physical meaning of the two dimensional  AoA/AoD $\Psi_U, \Omega_U$.

The UPA can be decomposed into a set of vertical/horizontal uniform linear arrays (ULAs). For the purpose of illustration, the spatial relationship of the vertical/horizontal ULA and incident/emergent electromagnetic (EM) wave is shown in \figref{UPAangle1}. When UAV is in the plane $z=0$ and facing towards positive x axis (left side of \figref{UPAangle1}),  the direction vector of the vertical ULA is $\mathbf{e}_{U, v} \triangleq [1,0,0]^T$, and the direction vector of the horizontal ULA is $\mathbf{e}_{U,h}\triangleq [0,1,0]^T$. While, the direction vector of the EM wave between UAV and BS is $\mathbf{e}_{BU}$. Thus, the dot product of  $\mathbf{e}_{U, v}$ and $\mathbf{e}_{BU}$ is
\begin{align}
\mathbf{e}_{U, v}  \cdot \mathbf{e}_{BU}  = \mathbf{e}_{U, v}^T \mathbf{e}_{BU}  =  \cos \psi_U =  \Psi_U
\end{align}
where $\psi_U$ is the angle between $\mathbf{e}_{U, v} $ and  $\mathbf{e}_{BU}$,   and the dot product of  $\mathbf{e}_{U, H}$ and $\mathbf{e}_{BU}$ is
\begin{align}
\mathbf{e}_{U, h}  \cdot \mathbf{e}_{BU}  = \mathbf{e}_{U, h}^T \mathbf{e}_{BU}  =  \cos \omega_U =  \Omega_U
\end{align}
where $\omega_U $ is the angle between $\mathbf{e}_{U, h}$ and  $\mathbf{e}_{BU}$.

When the UPA is rotated by $\mathbf{R}$ (right side of \figref{UPAangle1}), the direction of the vertical ULA turns to be $\mathbf{R} \mathbf{e}_{U, v} $, and the direction of the horizontal ULA turns to be $\mathbf{R} \mathbf{e}_{U,h} $, thus the dot products of the direction vectors of ULAs and EM wave become
\begin{subequations}
\begin{align}
\mathbf{R}\mathbf{e}_{U, v}  \cdot \mathbf{e}_{BU}  = \mathbf{e}_{U, v}^T \mathbf{R}^T\mathbf{e}_{BU}  =  \cos \psi_U =  \Psi_U   \\
\mathbf{R}\mathbf{e}_{U, h}  \cdot \mathbf{e}_{BU}  = \mathbf{e}_{U, h}^T \mathbf{R}^T\mathbf{e}_{BU}  =  \cos \omega_U =  \Omega_U
\end{align}
\end{subequations}
To summarize, $\Psi_U$ and $\Omega_U$ are the cosine of the angles between the decomposed ULAs (vertical/horizontal) and incident/emergent EM wave.

\begin{figure}[tp]{
\begin{center}{\includegraphics[width=9cm ]{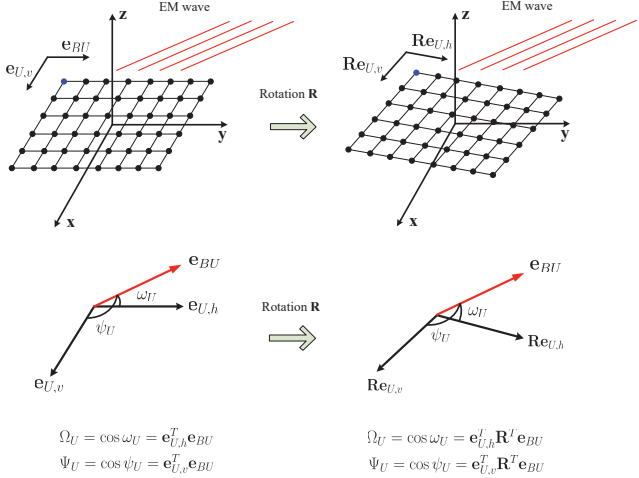}}
\caption{Geometric interpretation of the two-dimensional AoA/AoD at UAV side}\label{UPAangle1}
\end{center}}
\end{figure}

\subsection{UAV Jittering Effects on $\Psi_U, \Omega_U$}

The most direct reflection of jittering effects on UAV is the fluctuation of attitude angles around the desired attitude \cite{fuertes2019multirotor}. To investigate jittering effects on UAV mmWave channel, we should firstly model the jittering effects on UAV attitude. To this end, we represent the instantaneous attitude angles of the UAV as
\begin{align}
\alpha = \bar\alpha + \Delta\alpha \notag \\
\beta = \bar\beta + \Delta\beta \notag \\
\gamma = \bar\gamma + \Delta\gamma \notag
\end{align}
where $\bar\alpha, \bar\beta, \bar\gamma$  are the desired attitude angles which are configured by the flight control system, and $\Delta\alpha, \Delta\beta,  \Delta\gamma $ are the  fluctuations about the desired attitude angles caused by jittering effects. Owing to the small-scale fluctuation of the attitude angles,   the two dimensional AoA/AoD $ \Psi_U,  \Omega_U $ can be approximated using first-order Taylor polynomial, i.e.,

\begin{align} \label{TaylorEq}
&\left[
  \begin{array}{c}
    \Psi_U(\alpha, \beta, \gamma) \\
    \Omega_U(\alpha, \beta, \gamma) \\
  \end{array}
\right] \approx \left[
  \begin{array}{c}
     {\Psi_U}(\bar\alpha, \bar\beta, \bar\gamma) \\
     {\Omega}_U(\bar\alpha, \bar\beta, \bar\gamma)  \\
  \end{array}
\right]   +    \mathbf{J}(\bar\alpha, \bar\beta, \bar\gamma)  \left[
               \begin{array}{c}
                 \Delta\alpha \\
                 \Delta\beta \\
                 \Delta\gamma \\
               \end{array}
             \right]
\end{align}
where the Jacobian  matrix $\mathbf{J}(\alpha, \beta, \gamma)$ is represented as
\begin{align}
\mathbf{J}(\alpha, \beta, \gamma) = \left[
                                             \begin{array}{ccc}
                                               \frac{\partial \Psi_U}{\partial \alpha} &  \frac{\partial \Psi_U}{\partial \beta}  &  \frac{\partial \Psi_U}{\partial \gamma} \\
                                                \frac{\partial \Omega_U}{\partial \alpha}  & \frac{\partial \Omega_U}{\partial \beta} & \frac{\partial \Omega_U}{\partial \gamma}  \\
                                             \end{array}
\right]
\end{align}
and
\begin{align}
\begin{split}
 \frac{\partial \Psi_U}{\partial \alpha} &=  -\sin \alpha \cos \beta  \cos\phi\cos\theta + \cos \alpha \cos\beta  \cos\phi\sin\theta  \notag \\
 \frac{\partial \Psi_U}{\partial \beta} &=  -\cos \alpha \sin \beta  \cos\phi\cos\theta - \sin \alpha \sin\beta  \cos\phi\sin\theta  \notag \\
 &- \cos \beta \sin\phi \notag \\
 \frac{\partial \Psi_U}{\partial \gamma} &= 0 \notag \\
 \frac{\partial \Omega_U}{\partial \alpha} &= (-\sin \alpha \sin \beta \sin \gamma - \cos\alpha \cos\gamma) \cos\phi\cos\theta \notag \\
&  +(\cos\alpha\sin\beta\sin\gamma - \sin\alpha\cos\gamma )\cos\phi\sin\theta  \notag \\
 \frac{\partial \Omega_U}{\partial \beta} &=   (\cos \alpha \cos \beta \sin \gamma ) \cos\phi\cos\theta \notag \\
&  +(\sin\alpha\cos\beta\sin\gamma  )\cos\phi\sin\theta  -  \sin\beta \sin \gamma \sin\phi \notag \\
 \frac{\partial \Omega_U}{\partial \gamma}  &=  (\cos \alpha \sin \beta \cos \gamma + \sin\alpha \sin\gamma) \cos\phi\cos\theta \notag \\
&  +(\sin\alpha\sin\beta\cos\gamma - \cos\alpha\sin\gamma )\cos\phi\sin\theta  \notag \\
&+  \cos\beta \cos \gamma \sin\phi
\end{split}
\end{align}

Following \cite{dabiri2020analytical,yuan2020learning, kaadan2014multielement}, we model the attitude angles under jittering effects by Gaussian distribution,  i.e., $\Delta \alpha \sim \mathcal{N}(0, \sigma_\alpha^2)$, $\Delta\beta \sim \mathcal{N}(0, \sigma_\beta^2)$, and $\Delta\gamma \sim \mathcal{N}(0, \sigma_\gamma^2)$ and $\Delta\alpha, \Delta\beta, \Delta\gamma$  are independent of each other. Then, the following theorem is obtained about the distribution of $ [ {\Psi_U}(\alpha, \beta, \gamma), {\Omega}_U(\alpha, \beta, \gamma)]^T $ under jittering effects.

\begin{theorem}
The two dimensional AoA/AoD at UAV side, i.e., $ [ {\Psi_U}(\alpha, \beta, \gamma), {\Omega}_U(\alpha, \beta, \gamma)]^T $,  follows a Gaussian distribution, i.e., $ [ {\Psi_U}(\alpha, \beta, \gamma), {\Omega}_U(\alpha, \beta, \gamma)]^T \sim \mathcal{N}(\boldsymbol{\mu}, \boldsymbol{\Sigma}) $,
whose mean vector is
\begin{align}
\boldsymbol{\mu} =  \left[\begin{array}{c}
     {\Psi_U}(\bar\alpha, \bar\beta, \bar\gamma) \\
     {\Omega}_U(\bar\alpha, \bar\beta, \bar\gamma)  \\
  \end{array}\right]
\end{align}
and covariance matrix is
\begin{align}
\boldsymbol{\Sigma}=  \mathbf{J}(\bar\alpha, \bar\beta, \bar\gamma)  \left[
                                                                          \begin{array}{ccc}
                                                                            \sigma_\alpha^2 & 0 & 0 \\
                                                                            0 & \sigma_\beta^2 & 0 \\
                                                                            0 & 0 & \sigma_\gamma^2 \\
                                                                          \end{array}
                                                                        \right]
  \mathbf{J}(\bar\alpha, \bar\beta, \bar\gamma)^T
\end{align}
\end{theorem}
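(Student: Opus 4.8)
The plan is to read \eqref{TaylorEq} as an affine map applied to the jitter vector and then invoke the fact that the Gaussian family is closed under affine transformations. First I would collect the attitude fluctuations into a single random vector $\boldsymbol{\Delta} \triangleq [\Delta\alpha,\Delta\beta,\Delta\gamma]^T$. Since $\Delta\alpha\sim\mathcal{N}(0,\sigma_\alpha^2)$, $\Delta\beta\sim\mathcal{N}(0,\sigma_\beta^2)$, $\Delta\gamma\sim\mathcal{N}(0,\sigma_\gamma^2)$ are mutually independent, their joint law is the zero-mean multivariate Gaussian $\boldsymbol{\Delta}\sim\mathcal{N}(\mathbf{0},\mathbf{D})$ with $\mathbf{D}=\diag(\sigma_\alpha^2,\sigma_\beta^2,\sigma_\gamma^2)$; independence is precisely what makes the off-diagonal entries of $\mathbf{D}$ vanish.

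Next I would rewrite the first-order Taylor approximation \eqref{TaylorEq} compactly as $[\Psi_U,\Omega_U]^T \approx \boldsymbol{\mu} + \mathbf{J}(\bar\alpha,\bar\beta,\bar\gamma)\,\boldsymbol{\Delta}$, where $\boldsymbol{\mu}=[\Psi_U(\bar\alpha,\bar\beta,\bar\gamma),\Omega_U(\bar\alpha,\bar\beta,\bar\gamma)]^T$ is a deterministic constant and $\mathbf{J}(\bar\alpha,\bar\beta,\bar\gamma)$ is the $2\times 3$ Jacobian evaluated at the desired attitude (hence also deterministic). This exhibits $[\Psi_U,\Omega_U]^T$ as the affine image $\boldsymbol{\Delta}\mapsto \boldsymbol{\mu}+\mathbf{J}\boldsymbol{\Delta}$ of a Gaussian vector, so it is again Gaussian. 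I would then read off its first two moments by linearity of expectation: the mean is $\mathbb{E}[\boldsymbol{\mu}+\mathbf{J}\boldsymbol{\Delta}] = \boldsymbol{\mu}+\mathbf{J}\,\mathbb{E}[\boldsymbol{\Delta}] = \boldsymbol{\mu}$, and the covariance is $\mathrm{Cov}(\boldsymbol{\mu}+\mathbf{J}\boldsymbol{\Delta}) = \mathbf{J}\,\mathrm{Cov}(\boldsymbol{\Delta})\,\mathbf{J}^T = \mathbf{J}(\bar\alpha,\bar\beta,\bar\gamma)\,\mathbf{D}\,\mathbf{J}(\bar\alpha,\bar\beta,\bar\gamma)^T$, which is exactly the $\boldsymbol{\Sigma}$ asserted in the statement. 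Substituting the closed-form partial derivatives of $\Psi_U,\Omega_U$ listed immediately before the theorem yields the explicit entries of $\boldsymbol{\Sigma}$ if an expanded form is wanted.

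The algebra here is essentially bookkeeping; the one genuine modeling subtlety — and the step I would treat most carefully — is the legitimacy of replacing the exact nonlinear maps $\Psi_U(\cdot),\Omega_U(\cdot)$ by their first-order Taylor polynomials. I would justify this by the small-scale nature of UAV jitter (the same regime already invoked when writing \eqref{TaylorEq}): the second- and higher-order remainder terms are negligible relative to the linear term, so under this approximation the law is exactly Gaussian, whereas without it one obtains only asymptotic Gaussianity as $\sigma_\alpha,\sigma_\beta,\sigma_\gamma\to 0$. A minor remark worth including is that $\boldsymbol{\Sigma}=\mathbf{J}\mathbf{D}\mathbf{J}^T$ is automatically symmetric positive semidefinite, hence a legitimate covariance matrix even when $\mathbf{J}$ is rank-deficient (for instance, $\partial\Psi_U/\partial\gamma=0$ always holds), in which case the resulting Gaussian is degenerate but the statement remains valid.
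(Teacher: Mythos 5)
Your proposal is correct and follows essentially the same route as the paper's proof: both rely on the first-order Taylor expansion \eqref{TaylorEq} to express $[\Psi_U,\Omega_U]^T$ as an affine image of the independent Gaussian jitter vector and then read off the mean and covariance from the standard affine-transformation rule (the paper merely omits the bookkeeping you spell out). Your added remarks on the validity of the linearization and on the possible degeneracy of $\boldsymbol{\Sigma}$ are sound but not needed to match the paper's argument.
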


\begin{proof}
As $\Delta \alpha, \Delta \beta, \Delta \gamma$ follow Gaussian distribution, according to \eqref{TaylorEq}, it is easy to find that $[{\Psi_U}(\alpha, \beta, \gamma), {\Omega}_U(\alpha, \beta, \gamma)]^T$ also follows a Gaussian distribution. The mean vector and covariance matrix of $[{\Psi_U}(\alpha, \beta, \gamma), {\Omega}_U(\alpha, \beta, \gamma)]^T$ can be obtained according to their definitions, and the derivations are omitted as they both follow standard procedures.
\end{proof}

\begin{remark}
According to Theorem 2,  the mean vector is merely determined by the desired attitude $(\bar\alpha, \bar\beta, \bar\gamma)$, but the covariance matrix is jointly determined by  the scale of fluctuation of attitude angles, i.e., $(\sigma_\alpha^2, \sigma_\beta^2, \sigma_\gamma^2)$, and the desired attitude $(\bar\alpha, \bar\beta, \bar\gamma)$. In other words, even with the same $(\sigma_\alpha^2, \sigma_\beta^2, \sigma_\gamma^2)$, the jittering effects on UAV mmWave channel response might not be the same.
\end{remark}

According to the properties of the marginal distribution of a bivariate Gaussian distribution \cite{tong2012multivariate}, the marginal distributions of ${\Psi_U}(\alpha, \beta, \gamma)$ and ${\Omega_U}(\alpha, \beta, \gamma)$ are obtained as follows.
\begin{lemma}
The marginal distribution of ${\Psi_U}(\alpha, \beta, \gamma)$ is
\begin{align}
{\Psi_U}(\alpha, \beta, \gamma) \sim \mathcal{N}\left({\Psi_U}(\bar\alpha, \bar\beta, \bar\gamma), \boldsymbol{\Sigma}(1,1)\right)
\end{align}
where $\boldsymbol{\Sigma}(1,1) =    \sigma_\alpha^2 \left(\frac{\partial \Psi_U}{\partial \alpha} \right)^2  + \sigma_\beta^2 \left(\frac{\partial \Psi_U}{\partial \beta}\right)^2  + \sigma_\gamma^2 \left(\frac{\partial \Psi_U}{\partial \gamma}\right)^2$,
and the marginal distribution of ${\Omega_U}(\alpha, \beta, \gamma)$ is
\begin{align}
{\Omega_U}(\alpha, \beta, \gamma) \sim \mathcal{N}\left({\Omega_U}(\bar\alpha, \bar\beta, \bar\gamma), \boldsymbol{\Sigma}(2,2)\right)
\end{align}
where $\boldsymbol{\Sigma}(2,2) =  \sigma_\alpha^2 \left(\frac{\partial \Omega_U}{\partial \alpha} \right)^2  + \sigma_\beta^2 \left(\frac{\partial \Omega_U}{\partial \beta}\right)^2  + \sigma_\gamma^2 \left(\frac{\partial \Omega_U}{\partial \gamma}\right)^2$.
\end{lemma}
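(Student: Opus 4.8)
The plan is to obtain this directly from Theorem~2 together with the elementary fact that every marginal of a multivariate Gaussian is again Gaussian. First I would recall from Theorem~2 that $[{\Psi_U}(\alpha,\beta,\gamma),{\Omega_U}(\alpha,\beta,\gamma)]^T \sim \mathcal{N}(\boldsymbol{\mu},\boldsymbol{\Sigma})$ with $\boldsymbol{\mu} = [{\Psi_U}(\bar\alpha,\bar\beta,\bar\gamma),\,{\Omega_U}(\bar\alpha,\bar\beta,\bar\gamma)]^T$ and $\boldsymbol{\Sigma} = \mathbf{J}(\bar\alpha,\bar\beta,\bar\gamma)\,\diag(\sigma_\alpha^2,\sigma_\beta^2,\sigma_\gamma^2)\,\mathbf{J}(\bar\alpha,\bar\beta,\bar\gamma)^T$. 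Then, by the standard marginalization property of the bivariate Gaussian \cite{tong2012multivariate}, the first component ${\Psi_U}$ is $\mathcal{N}(\boldsymbol{\mu}(1),\boldsymbol{\Sigma}(1,1))$ and the second component ${\Omega_U}$ is $\mathcal{N}(\boldsymbol{\mu}(2),\boldsymbol{\Sigma}(2,2))$, which immediately delivers the stated means ${\Psi_U}(\bar\alpha,\bar\beta,\bar\gamma)$ and ${\Omega_U}(\bar\alpha,\bar\beta,\bar\gamma)$.

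Next I would make the two variances explicit. Writing $\mathbf{D} = \diag(\sigma_\alpha^2,\sigma_\beta^2,\sigma_\gamma^2)$ and denoting the two rows of $\mathbf{J}(\bar\alpha,\bar\beta,\bar\gamma)$ by $\mathbf{j}_1$ and $\mathbf{j}_2$, the diagonal entries of the congruence $\boldsymbol{\Sigma} = \mathbf{J}\mathbf{D}\mathbf{J}^T$ are $\boldsymbol{\Sigma}(i,i) = \mathbf{j}_i \mathbf{D} \mathbf{j}_i^T = \sum_{k} \mathbf{D}(k,k)\,[\mathbf{j}_i(k)]^2$, the cross terms vanishing because $\mathbf{D}$ is diagonal. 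Substituting the entries of $\mathbf{j}_1 = [\frac{\partial \Psi_U}{\partial \alpha},\frac{\partial \Psi_U}{\partial \beta},\frac{\partial \Psi_U}{\partial \gamma}]$ and $\mathbf{j}_2 = [\frac{\partial \Omega_U}{\partial \alpha},\frac{\partial \Omega_U}{\partial \beta},\frac{\partial \Omega_U}{\partial \gamma}]$ (all evaluated at $(\bar\alpha,\bar\beta,\bar\gamma)$), listed after \eqref{TaylorEq}, yields exactly $\boldsymbol{\Sigma}(1,1) = \sigma_\alpha^2 (\frac{\partial \Psi_U}{\partial \alpha})^2 + \sigma_\beta^2 (\frac{\partial \Psi_U}{\partial \beta})^2 + \sigma_\gamma^2 (\frac{\partial \Psi_U}{\partial \gamma})^2$ and, analogously, $\boldsymbol{\Sigma}(2,2) = \sigma_\alpha^2 (\frac{\partial \Omega_U}{\partial \alpha})^2 + \sigma_\beta^2 (\frac{\partial \Omega_U}{\partial \beta})^2 + \sigma_\gamma^2 (\frac{\partial \Omega_U}{\partial \gamma})^2$, completing the claim.

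There is no genuine obstacle here; the only point requiring care is correctly reading off the $(1,1)$ and $(2,2)$ entries of $\mathbf{J}\mathbf{D}\mathbf{J}^T$, and the fact that the whole argument rests on the first-order linearization \eqref{TaylorEq}, which has already been justified and invoked in the proof of Theorem~2. As a sanity check I would note that $\frac{\partial \Psi_U}{\partial \gamma} = 0$, so the $\sigma_\gamma^2$ contribution disappears from $\boldsymbol{\Sigma}(1,1)$ — roll jitter leaves the ${\Psi_U}$ marginal untouched — consistent with the geometric interpretation of ${\Psi_U}$ in terms of the vertical ULA.
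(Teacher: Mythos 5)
Your proof is correct and follows exactly the route the paper takes: it invokes Theorem~2 and the standard marginalization property of a bivariate Gaussian \cite{tong2012multivariate}, then reads off the diagonal entries of $\mathbf{J}\,\diag(\sigma_\alpha^2,\sigma_\beta^2,\sigma_\gamma^2)\,\mathbf{J}^T$. The explicit expansion of $\boldsymbol{\Sigma}(1,1)$ and $\boldsymbol{\Sigma}(2,2)$ and the sanity check on $\frac{\partial \Psi_U}{\partial \gamma}=0$ are welcome additional detail the paper leaves implicit.
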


\begin{figure}[t]
\begin{minipage}[!h]{0.48\linewidth}
\centering
\includegraphics[ width=1.1\textwidth]{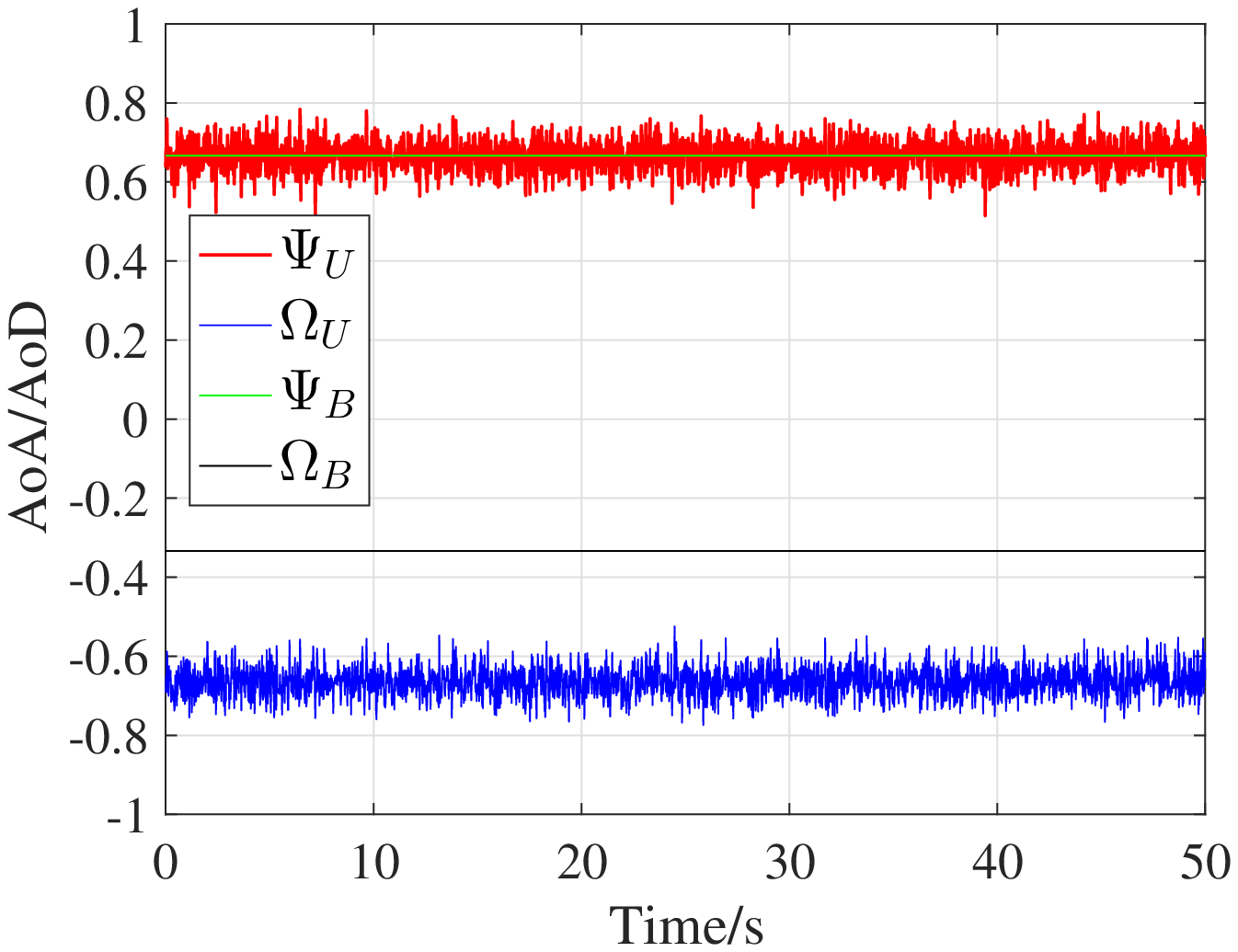}
\subcaption{\scriptsize Variation of $\Psi_U$, $\Omega_U$, $\Psi_B$ and $\Omega_B$  over time (Scenario 1)}
\label{Scenario1Time}
\end{minipage}
\begin{minipage}[!h]{0.48\linewidth}
\centering
\includegraphics[ width=1.1\textwidth]{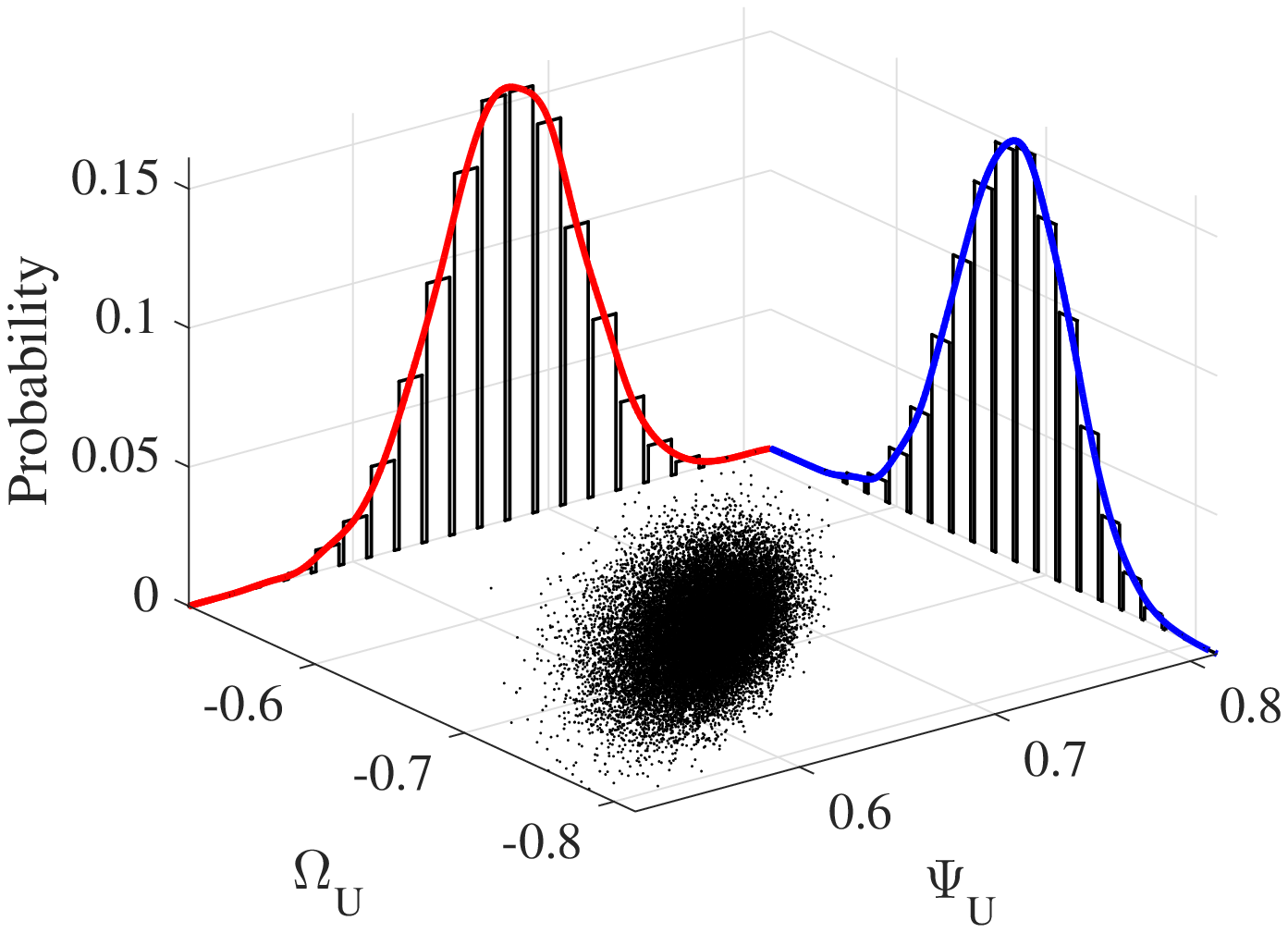}
\subcaption{\scriptsize Empirical marginal probability density function of $\Psi_U$ and $\Omega_U$ (Scenario 1)}
\label{Scenario1PDF}
\end{minipage}
\begin{minipage}[!h]{0.48\linewidth}
\centering
\includegraphics[ width=1.1\textwidth]{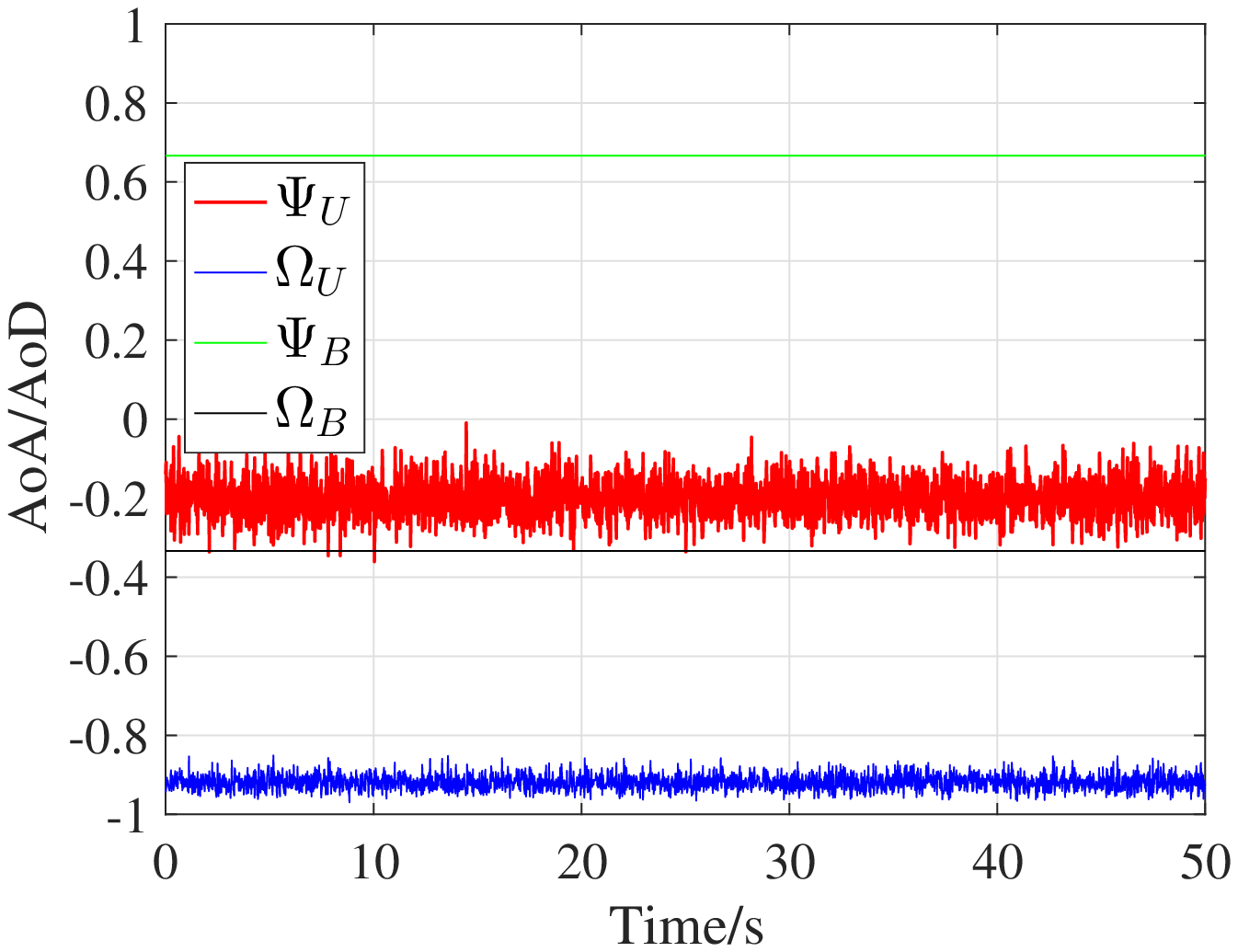}
\subcaption{\scriptsize Variation of $\Psi_U$, $\Omega_U$, $\Psi_B$ and $\Omega_B$  over time (Scenario 2)}
\label{Scenario2Time}
\end{minipage}
\hspace{0.17cm}\begin{minipage}[!h]{0.48\linewidth}
\centering
\includegraphics[ width=1.1\textwidth]{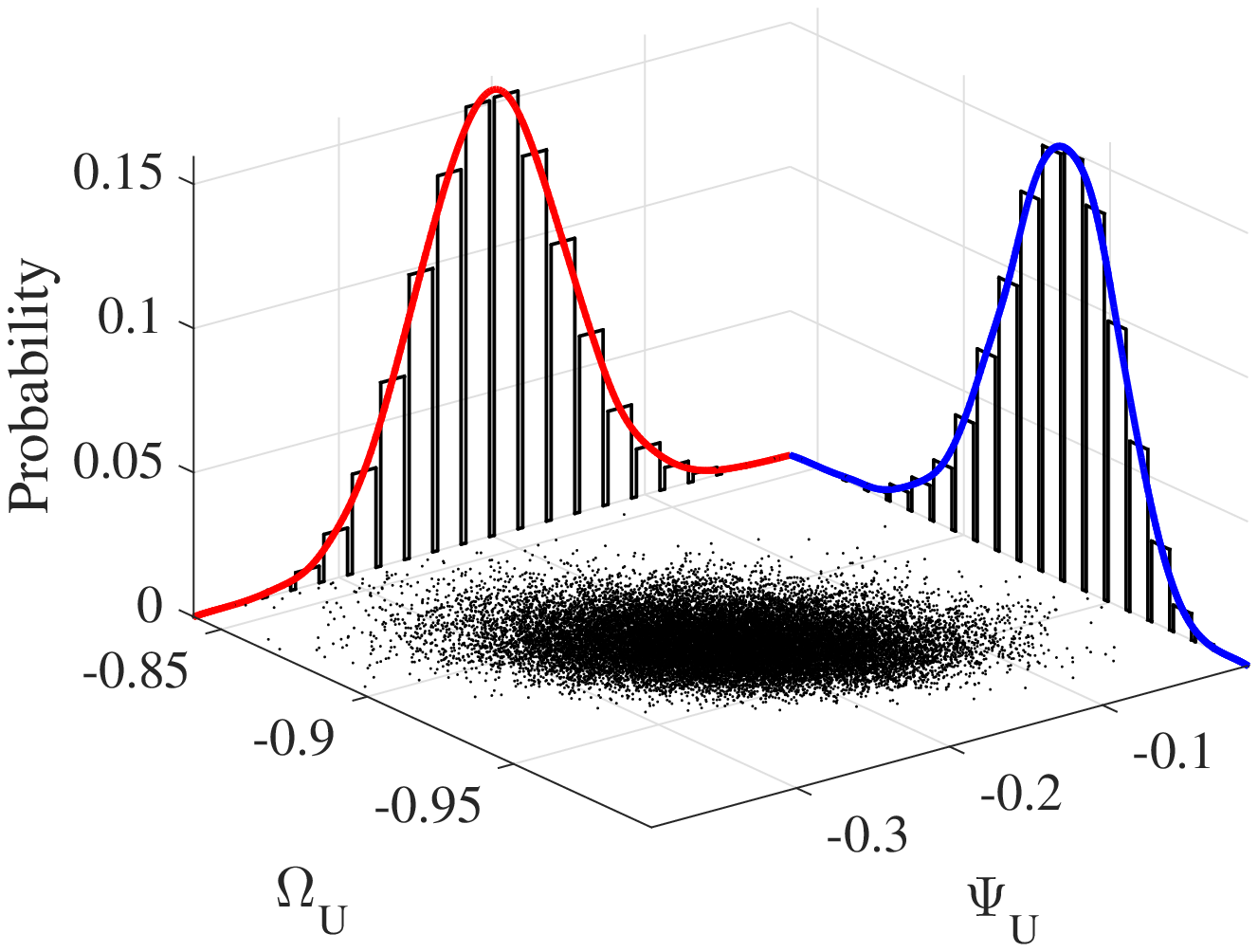}
\subcaption{\scriptsize Empirical marginal probability density function of $\Psi_U$ and $\Omega_U$ (Scenario 2)}
\label{Scenario2PDF}
\end{minipage}
\begin{minipage}[!h]{0.48\linewidth}
\centering
\includegraphics[ width=1.1\textwidth]{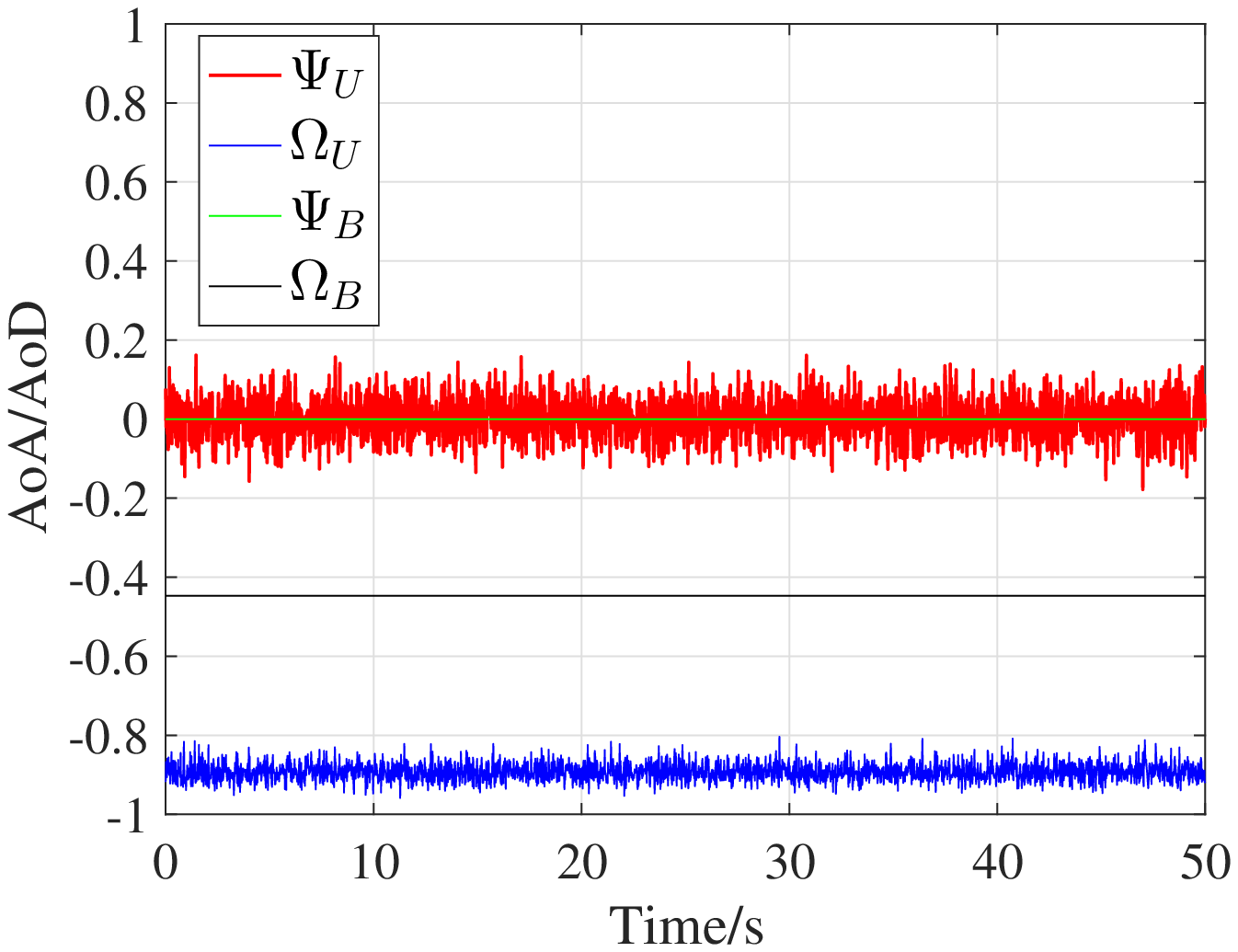}
\subcaption{\scriptsize Variation of $\Psi_U$, $\Omega_U$, $\Psi_B$ and $\Omega_B$   over time (Scenario 3)}
\label{Scenario3Time}
\end{minipage}
\hspace{0.17cm}\begin{minipage}[!h]{0.48\linewidth}
\centering
\includegraphics[ width=1.1\textwidth]{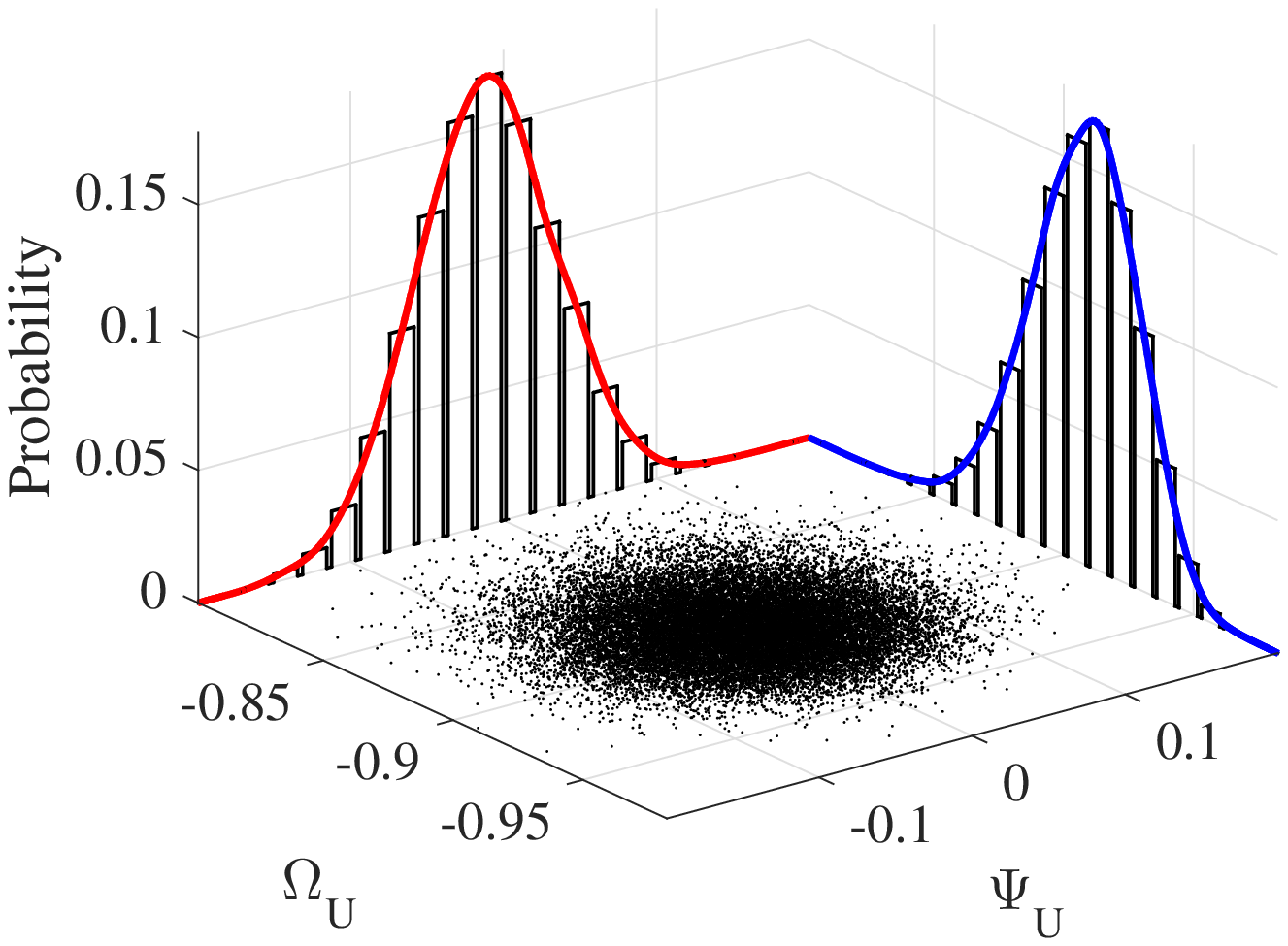}
\subcaption{\scriptsize Empirical marginal probability density function of $\Psi_U$ and $\Omega_U$ (Scenario 3)}
\label{Scenario3PDF}
\end{minipage}
\caption{Jittering effects on two-dimensional AoA/AoD with different UAV positions and UAV attitudes} \label{AoAFluctuation}
\end{figure}

\subsection{Numerical Examples}

To provide further insights into jittering effects on UAV mmWave channel, we carry out numerical simulations in the following three scenarios. \\
\underline{\emph{Scenario 1.}} UAV is hovering at the position $\mathbf{p}_U = [-100, 100, 50]^T$ (Cartesian coordinates) with its desired flight attitude being $\bar\alpha = \bar\beta = \bar\gamma = 0$ rad; \\
\underline{\emph{Scenario 2.}}  UAV is hovering at the position $\mathbf{p}_U = [-100, 100, 50]^T$ with its desired flight attitude being $\bar\alpha = 1$ rad, $\bar\beta = \bar\gamma = 0$ rad; \\
\underline{\emph{Scenario 3.}} UAV is hovering at the position  $\mathbf{p}_U = [0, 100, 50]^T$ with its desired flight attitude being $\bar\alpha = \bar\beta = \bar\gamma = 0$ rad; \\

In all the three scenarios, we set $N_{B,x} = N_{B,z} =  N_{U,x} = N_{U,y} = 16$ and the operating frequency as $28$GHz. We assume that the angles of yaw, pitch and roll follow Gaussian distribution with the standard deviation $\sigma_\alpha = \sigma_\beta = \sigma_\gamma = 0.05$, which means $99.73\%$ of the values of $\alpha$, $\beta$, $\gamma$ are less than $0.15$ rad (equivalently $8.59$ degree) biased from their expected values.

From \figref{Scenario1Time}, \figref{Scenario2Time}, and \figref{Scenario3Time}, we can find that UAV jitter does not cause fluctuation to $\Omega_B$ and $\Psi_B$. With respect to $\Omega_U$ and $\Psi_U$, we firstly derive the  mean vector and covariance matrix according to Theorem 2 and then based on which we will study the three examples respectively.

In Scenario 1, the mean vector and covariance matrix are
\begin{align}
\boldsymbol{\mu} &= [0.6667, -0.6667]^T , \;
\boldsymbol{\Sigma} &=
 \left[
             \begin{array}{cc}
               0.0014 & 0.0011 \\
               0.0011 & 0.0014   \\
             \end{array}
           \right] \notag
\end{align}
According to Lemma 1, the standard deviation of the marginal distribution of $\Psi_U$ is $0.0374$ and the standard deviation of the marginal distribution $\Omega_U$ is $0.0374$. Thus, the $99.73\%$ confidence interval (three-sigma interval) of $\Psi_U$ is $(0.5545, 0.7789)$, and the $99.73\%$ confidence interval of $\Omega_U$ is $(-0.7789, -0.5545)$. From \figref{Scenario1Time} and \figref{Scenario1PDF}, $\Omega_U$ and $\Psi_U$ both fluctuate around $0.6667$ and  $-0.6667$, respectively, and the degrees of the fluctuations are almost the same, which is in accordance to our theoretical expectation.

In Scenario 2, UAV hovers in the same position as Scenario 1 but changes its facing direction from $\bar\alpha=0$ to $\bar\alpha=1$. The mean vector and covariance matrix are
\begin{align}
\boldsymbol{\mu}  = [-0.2008, -0.9212]^T, \; \boldsymbol{\Sigma}  =
 \left[
             \begin{array}{cc}
               0.0024 & -0.0005 \\
               -0.0005 & 0.0004   \\
             \end{array}
           \right] \notag
\end{align}
It is noteworthy that the mean vector and covariance matrix are different from Scenario 1. According to Lemma 1, the standard deviation of the marginal distribution of $\Psi_U$ is $0.0489$ and the standard deviation of the marginal distribution $\Omega_U$ is $0.02$. Thus, the $99.73\%$ confidence interval of $\Psi_U$ is $(-0.3475, -0.0541)$, and the $99.73\%$ confidence interval of $\Omega_U$ is $(-0.9812, -0.8612)$. From \figref{Scenario2Time} and \figref{Scenario2PDF},  we can see that (1) the expected values of $\Omega_U$ and $\Psi_U$ are different from Scenario 1, and (2) $\Omega_U$ experiences mild fluctuation, but $\Psi_U$ experiences intense fluctuation, which validates our conclusion in Theorem 2 that UAV attitude is a determinant factor to the jittering effects on UAV mmWave channel.

In Scenario 3, UAV's attitude is the same as Scenario 1, but UAV's position changes to $\mathbf{p}_U = [0, 100, 50]^T$. The mean vector and covariance matrix are
\begin{align}
\boldsymbol{\mu}  = [0, -0.8944]^T, \; \boldsymbol{\Sigma}  =
 \left[
             \begin{array}{cc}
               0.0025 & 0 \\
               0 & 0.0005   \\
             \end{array}
           \right] \notag
\end{align}
It is noteworthy that the mean vector and covariance matrix are different from Scenario 1.  According to Lemma 1, the standard deviation of the marginal distribution of $\Psi_U$ is $0.05$ and the standard deviation of the marginal distribution $\Omega_U$ is $0.0224$. Thus, the $99.73\%$ confidence interval of $\Psi_U$ is $(-0.15, 0.15)$, and the $99.73\%$ confidence interval of $\Omega_U$ is $(-0.9616, -0.8272)$. From \figref{Scenario3Time} and \figref{Scenario3PDF},   we can see that (1) the expected values of $\Omega_U$ and $\Psi_U$ are different from Scenario 1, and (2) $\Omega_U$ experiences mild fluctuation, but $\Psi_U$ experiences intense fluctuation, which validates our conclusion in Theorem 2 that UAV position is also a determinant factor to the jittering effects on UAV mmWave channel.

To summarize, the above numerical results indicate that (1) UAV jitter does not cause fluctuations to the AoA/AoD of mmWave channel at BS side; (2) even with the same statistics of the attitude angles (i.e., yaw, pitch and roll), the jittering effects on the AoA/AoD of mmWave channel vary with UAV's position and attitude.

\section{Beam Training Design for UAV MmWave Communications With Navigation Information}

Compressed sensing (CS) technique, which firstly measures channel response using the random sensing matrix and then reconstructs channel response from the measurements, has been proved to be powerful in beam training (or channel estimation) in mmWave communications \cite{wang2020orthogonal, gao2017reliable}. In UAV communication, through the relationship between the two dimensional  AoA/AoD and UAV position \& attitude angles,  navigation information can be utilized to assist beam training.  In this section, UAV navigation information assisted CS based beam training scheme is proposed to estimate the AoA/AoD of UAV mmWave channel.

\subsection{Review of CS Based Beam Training}

In downlink transmission, the  temporal model of the received signal at UAV side is given by
\begin{align}
y_n  = \mathbf{m}_{U,n}^H \mathbf{H} \mathbf{f}_{B,n} s + \mathbf{m}_{U,n}^H \bar{\mathbf{w}}_n \label{sensing}
\end{align}
where $n$ is time index, and without loss of generality we let the transmitted pilot signal be $s=1$ in the following context. At the training stage, the transmit beamforming vector $ \mathbf{f}_{B,n}$ and the receive beamforming vector $\mathbf{m}_{U,n}$ are configured to measure the AoA/AoD of the LoS path. Substituting \eqref{CHresponse} into \eqref{sensing}, we have
\begin{align}
y_n  = \tau \cdot (\mathbf{f}^*_{B,n} \otimes \mathbf{m}_{U,n} )^H   (\mathbf{v}_B \otimes \mathbf{v}_U ) + \mathbf{m}_{U,n}^H \bar{\mathbf{w}}_n
\end{align}
where $\tau =  \frac{\lambda}{4\pi d_{BU}}   e^{j\frac{-2\pi  d_{BU}  }{\lambda }}$. By concatenating $N$ channel measurements, the vector-form received signal is given by
\begin{align}
\mathbf{y}  & = \tau \cdot \mathbf{D} (\mathbf{v}_B \otimes \mathbf{v}_U ) + \mathbf{w}
\end{align}
where
\begin{align}
\mathbf{y} & = \left[y_{1},\; y_{2},\;\cdots, y_{N} \right]^T \notag\\
\mathbf{D} & = \left[\mathbf{f}_{B,1}^* \otimes \mathbf{m}_{U,1},\; \mathbf{f}_{B,2}^*  \otimes \mathbf{m}_{U,2},\; \cdots, \;\mathbf{f}_{B,N}^*  \otimes \mathbf{m}_{U,N}\right]^H\notag \\
\mathbf{w} &= \left[\mathbf{m}_{U,1}^H\bar{\mathbf{w}}_1,\; \mathbf{m}_{U,2}^H\bar{\mathbf{w}}_2,\; \cdots, \;\mathbf{m}_{U,N}^H \bar{\mathbf{w}}_N\right]^T \notag
\end{align}
The matrix $\mathbf{D}$ is the sensing matrix. In the realm of compressed sensing, a good sensing matrix should satisfy restricted isometry property (RIP) and achieve low mutual coherence \cite{CompressedSensingBook}. Random matrices, e.g., Gaussian random matrix,  have a high probability of meeting the desired properties. Due to the constant modulus constraint of mmWave array antenna with hybrid structure, the sensing matrix can be constructed by imposing constant modulus constraint on the matrix elements and randomizing their phases \cite{gao2017reliable, wang2020orthogonal}. It is noteworthy that the  sensing range of randomly generated sensing matrix is semi-omnidirectional, i.e., the sensing range is
\begin{align}
&\Psi_B, \Omega_B \in (-1,1) \notag \\
&\Psi_U, \Omega_U \in (-1,1) \notag
\end{align}

However, a salient advantage of mmWave communication system mounted on UAV platforms is that UAV navigation system consists of GPS, barometer, and IMU, e.g., gyroscope and accelerometer, which are capable of providing estimation of 3D UAV position  and UAV attitude  \cite{DJI}. With the relationship between UAV navigation information and AoA/AoD of mmWave channel respoonse (i.e., \eqref{BSside} and \eqref{AngleUAV}), we can further narrow down the sensing range of beam training. The reduced sensing range and the more concentrated radiation power of training signal, as the benefits of navigation information, are expected to render beam training more accurate and more training cost effective. To attain the goal, we should first investigate the extent that the sensing range of AoA/AoD can be reduced by UAV navigation information and then propose a scheme for CS-based beam training to constrain the sensing range.

\begin{figure*}[t]
\begin{minipage}[!h]{0.32\linewidth}
\centering
\includegraphics[ width=1.1\textwidth]{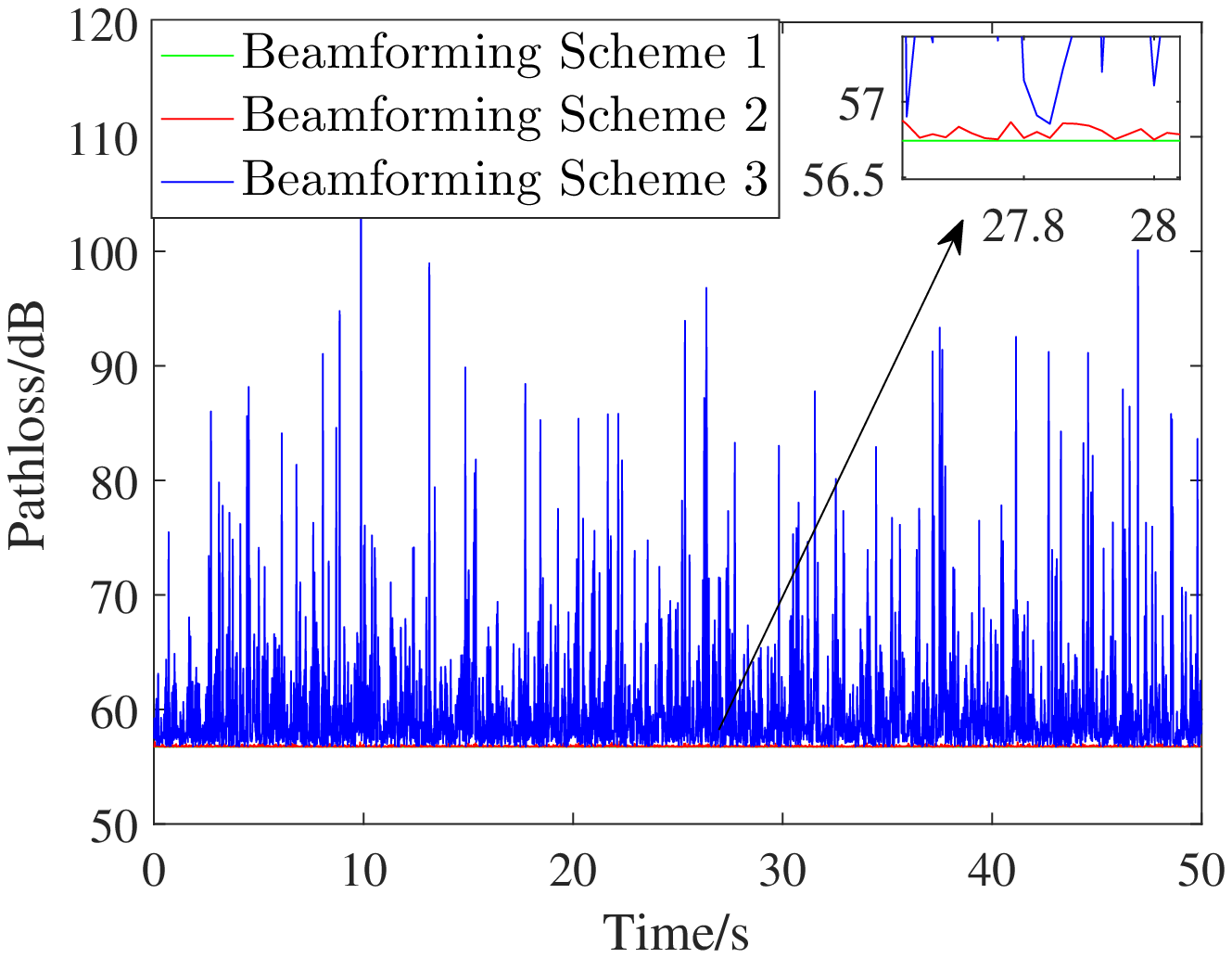}
\subcaption{\footnotesize Scenario1}
\label{Scenario1}
\end{minipage}
\begin{minipage}[!h]{0.32\linewidth}

\centering
\includegraphics[ width=1.1\textwidth]{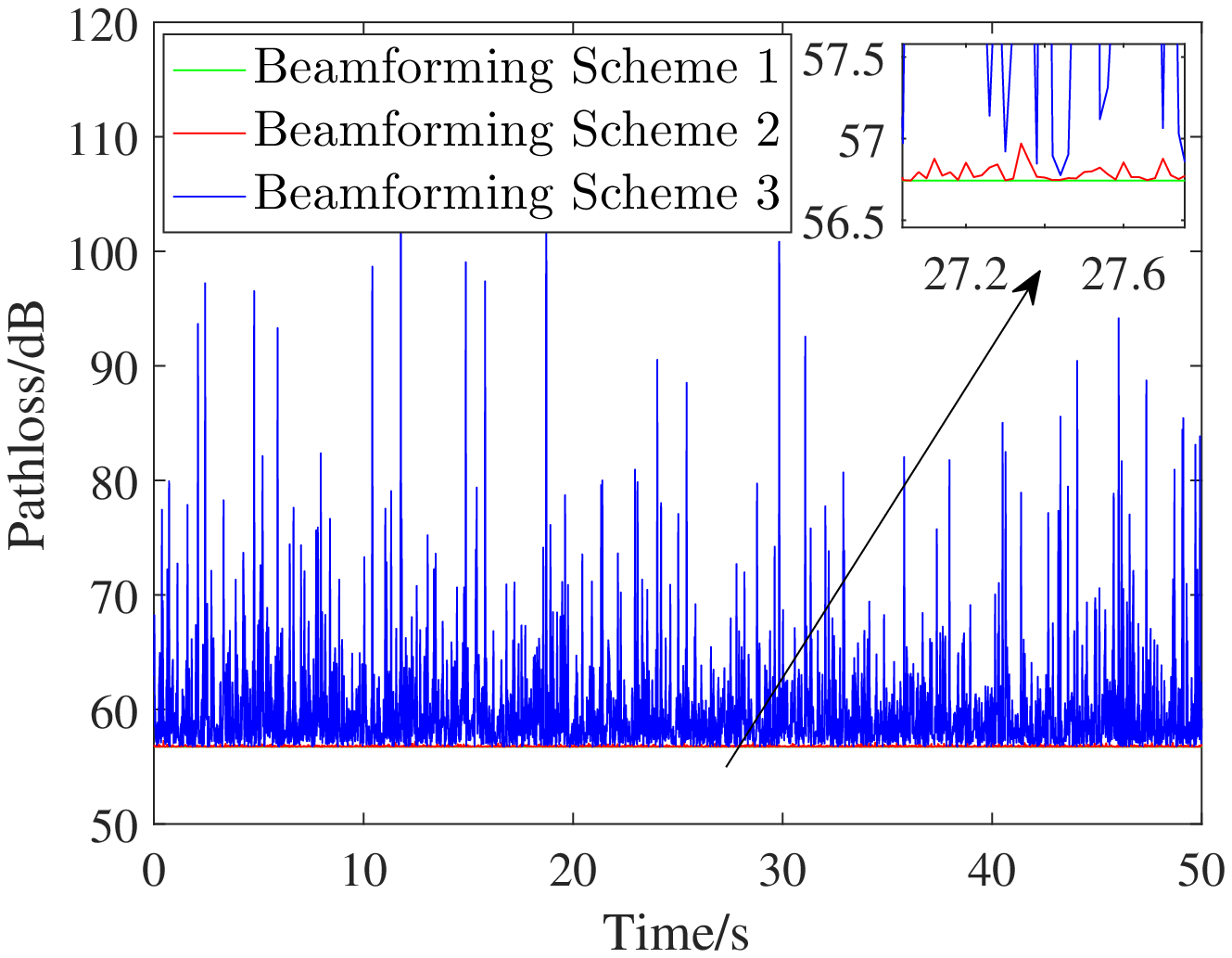}
\subcaption{\footnotesize Scenario2}
\label{Scenario2}
\end{minipage}
\begin{minipage}[!h]{0.32\linewidth}
\centering
\includegraphics[ width=1.1\textwidth]{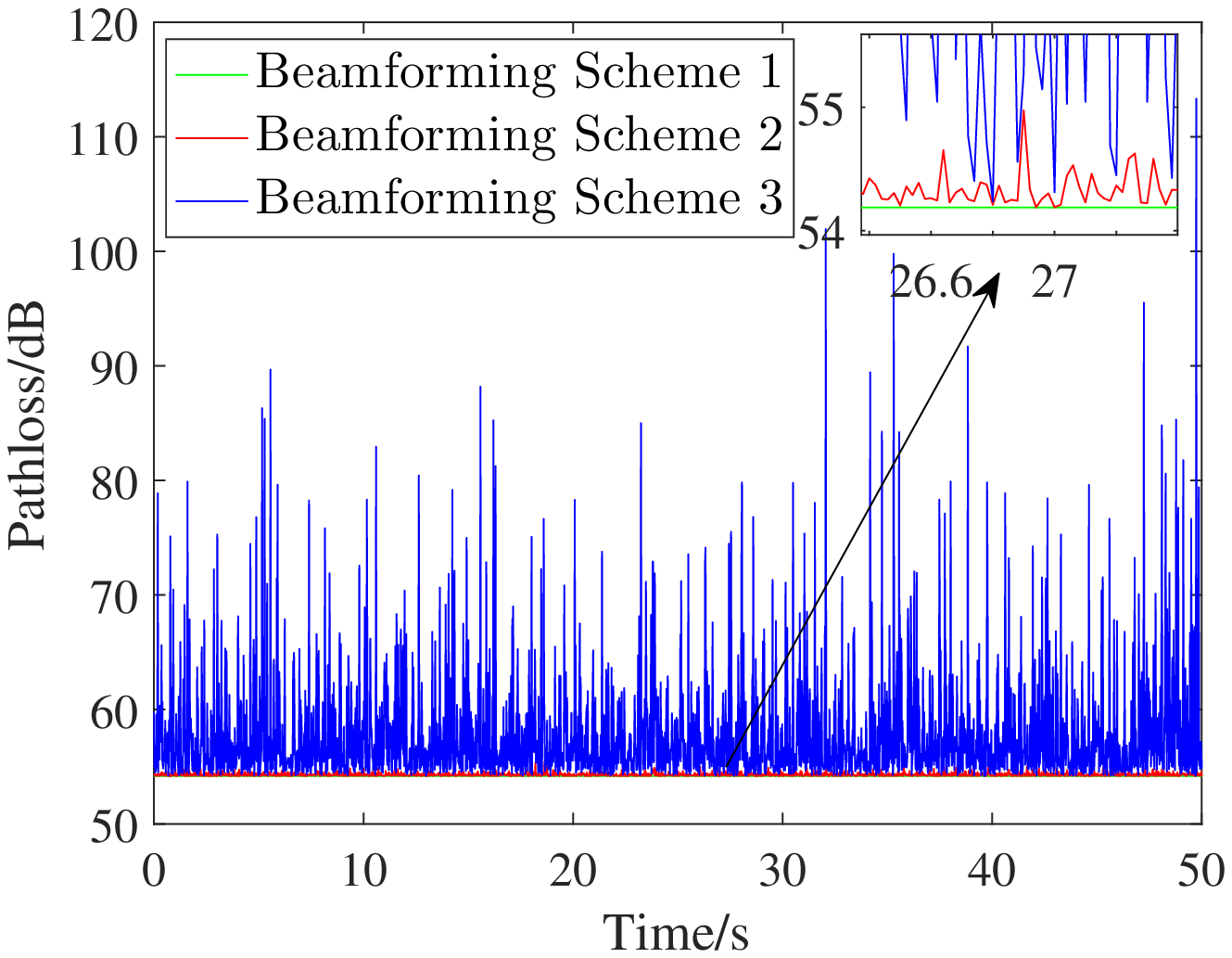}
\subcaption{\footnotesize Scenario3}
\label{Scenario3}
\end{minipage}
\caption{Variation of free space path loss over time in different scenarios} \label{Pathloss}
\end{figure*}

\subsection{The Role of UAV Navigation Information in AoA/AoD Estimation}
The accuracy of 2-D positioning of civilian GPS can  be less than $1$ meter in terms of standard deviation \cite{wing2005consumer}, while the accuracy of altitude estimation of GPS is generally low, which is approximately 5 meters. However, by fusing the information of barometer and GPS, the estimation of altitude can be improved to $\leq 0.5$ meter \cite{zaliva2014barometric}. The yielded GPS coordinates $\hat{\mathbf{p}}_U = (\hat{x}, \hat{y}, \hat{z})^T$ of UAV can be fed back to BS and utilized by both sides for AoA/AoD estimation. To facilitate analysis, we assume that
\begin{align}
\hat{\mathbf{p}}_U = (\hat{x}, \hat{y}, \hat{z})^T  = (x, y, z)^T + \mathbf{n} \label{UAVpos}
\end{align}
where $\mathbf{n}\sim \mathcal{N}(\mathbf{0}_{3\times 1}, \mathbf{I}_{3\times 3})$, which is harsher than the best accuracy that can be achieved in practice.

With regard to UAV attitude, a widely adopted solution is to fuse the information provided by gyroscope and accelerometer \cite{GyroAcc}. Through integrating the angular velocity generated by gyroscope, rotation angles can be obtained. Drift is the main drawback of gyroscope based rotation angles estimation. To compensate for the drift, accelerometer should be used. However, gyroscope and accelerometer can be very sensitive to the jitter/vibration induced by the engine and wind gust \cite{mahony2008nonlinear, weinberg2011gyro}, and the estimation error increases with the degree of jittering effects. Considering the fact that the long term attitude of the aircraft is much easier to be estimated than its instantaneous attitude, we assume that inertial units can accurately yield the expected values of rotation angles, i.e.,
\begin{align}
(\hat\alpha, \hat\beta, \hat\gamma)^T = (\bar{\alpha}, \bar{\beta}, \bar{\gamma})^T
\end{align}

According to the relationship derived in \eqref{BSside} and \eqref{AngleUAV}, the  navigation-system-yielded UAV position $(\hat{x}, \hat{y}, \hat{z})^T$ and UAV attitude $(\hat\alpha, \hat\beta, \hat\gamma)^T$   can be  transformed into the two-dimensional AoA/AoD $(\hat{\Psi}_B, \hat{\Omega}_B)$ and  $(\hat{\Psi}_U, \hat{\Omega}_U)$ as a priori knowledge for the subsequent beam training process. To figure out the degree that we can trust the a priori knowledge,  we will investigate the performance loss caused by the direct application of $(\hat{\Psi}_B, \hat{\Omega}_B)$ and  $(\hat{\Psi}_U, \hat{\Omega}_U)$  to UAV mmWave communications, and based on which we will determine whether they need further refinement by beam training.

It is widely acknowledged that analog beamforming, which shapes the beam through a single radio frequency (RF) chain for all the element antennas, is considered mandatory for mmWave communication systems, e.g., 5G NR \cite{giordani2018tutorial}. With beamforming, the UPA of UAV is indeed a directional antenna whose antenna gain is dependent on the estimation accuracy of $({\Psi}_B, {\Omega}_B)$ and  $({\Psi}_U, {\Omega}_U)$.  To study the free-space path loss of UAV mmWave communications with beamforming, we assume a downlink scenario where BS transmits and UAV receives. The received signal at UAV side is written by
\begin{align}
y  = \mathbf{m}_{U}^H \mathbf{H} \mathbf{f}_{B}s + \mathbf{m}_{U}^H \bar{\mathbf{w}}
\end{align}
where  $s$ is the transmitted signal, $\bar{\mathbf{w}} \sim \mathcal{CN}(\mathbf{0}_{N_U \times 1}, \sigma^2\mathbf{I}_{N_U \times N_U})$ is additive white Gaussian noise, $ \mathbf{m}_{U}$ is the receive beamforming vector at UAV side, and $\mathbf{f}_B$ is the transmit beamforming vector at BS side.  The transmit/receive beamforming vectors  are  of the form
\begin{subequations} \label{BFvector}
\begin{align}
\mathbf{f}_{B}(\breve{\Psi}_B, \breve{\Omega}_B) =  \frac{1}{\sqrt{N_B}}\mathbf{v}(\breve{\Psi}_B, N_{B,x}) \otimes  \mathbf{v}(\breve{\Omega}_B, N_{B,z}) \\
\mathbf{m}_{U}(\breve{\Psi}_U, \breve{\Omega}_U) =  \frac{1}{\sqrt{N_U}}\mathbf{v}(\breve{\Psi}_U, N_{U,x}) \otimes  \mathbf{v}(\breve{\Omega}_U, N_{U,y})
\end{align}
\end{subequations}
where $(\breve{\Psi}_U, \breve{\Omega}_U)$ and $(\breve{\Psi}_B, \breve{\Omega}_B)$ denote the estimated two-dimensional AoA/AoD.

The free-space path loss of UAV mmWave communications with beamforming is written as  \cite{goldsmith2005wireless}
\begin{align}
P_L(dB) = 10\log_{10} \frac{P_t}{P_r} = -10\log_{10} \frac{G_l \lambda^2}{(4\pi d_{BU})^2}
\end{align}
where $P_t$ is transmit power, $P_r$ is receive power, and  $G_l = \left| \mathbf{m}_{U}^H \mathbf{v}_U \mathbf{v}_B^H \mathbf{f}_{B} \right|^2$ is the product of the transmit and receive antenna field radiation patterns in the LoS direction, i.e., beamforming gain.

To explore the role that  $(\hat{\Psi}_B, \hat{\Omega}_B)$ and  $(\hat{\Psi}_U, \hat{\Omega}_U)$ can play in UAV beamforming, we investigate the free space pathloss for the following three beamforming schemes.

\noindent \emph{\underline{Beamforming Scheme 1:}} At BS side, $\breve{\Psi}_B = \Psi_B, \breve{\Omega}_B = \Omega_B$, and at UAV side, $\breve{\Psi}_U = \Psi_U, \breve{\Omega}_U = \Omega_U$.

\noindent \emph{\underline{Beamforming Scheme 2:}} At BS side, $\breve{\Psi}_B = \hat{\Psi}_B, \breve{\Omega}_B = \hat{\Omega}_B$, and at UAV side, $\breve{\Psi}_U = \Psi_U, \breve{\Omega}_U = \Omega_U$.

\noindent \emph{\underline{Beamforming Scheme 3:}} At BS side, $\breve{\Psi}_B = \hat{\Psi}_B, \breve{\Omega}_B=\hat{\Omega}_B$, and at UAV side, $\breve{\Psi}_U = \hat{\Psi}_U, \breve{\Omega}_U = \hat{\Omega}_U$.

\figref{Pathloss} depicts the pathloss of the three beamforming schemes in the three scenarios proposed in Section II.C.  Beamforming scheme 1 is the benchmark, where the accurate $(\Psi_B, \Omega_B),  ({\Psi}_U, \Omega_U)$ are known, and it always achieves the lowest pathloss. In Beamforming scheme 2, accurate $({\Psi}_U, \Omega_U)$ is known at UAV side, and $(\hat{\Psi}_B, \hat{\Omega}_B)$ at BS side is obtained from UAV navigation system. It can be seen that Beamforming scheme 2  achieves almost the same pathloss as Beamforming scheme 1 (the difference is generally less than $0.5$dB), which indicates that UAV location estimation error incurs negligible performance degradation. By contrast, in Beamforming scheme 3, where both $(\hat{\Psi}_U, \hat{\Omega}_U)$ at UAV side and $(\hat{\Psi}_B, \hat{\Omega}_B)$ at BS side are obtained from UAV navigation system, the pathloss fluctuates dramatically. From  \figref{Pathloss}, we can see that the highest instantaneous path loss is more than $100$dB, while the lowest instantaneous path loss is $56.8$dB. It indicates that the instantaneous performance degradation caused by UAV attitude estimation error can be as significant as $50$dB, which will undoubtedly result in outage.

According to the above results, the roles of UAV navigation information are summarized as follows.
\begin{itemize}
\item Navigation-system-yielded $(\hat{\Psi}_B, \hat{\Omega}_B)$ at BS side can be readily applied to UAV beamforming, which will cause negligible performance loss.
\item Navigation-system-yielded $(\hat{\Psi}_U, \hat{\Omega}_U)$ at UAV side should be further refined by beam training. However, it can used as the prior knowledge for beam training to narrow down the sensing range.
\end{itemize}

\subsection{Framework of the Proposed UAV Beam Training Scheme}

With the navigation-system-yielded $(\hat{\Psi}_B, \hat{\Omega}_B)$ and  $(\hat{\Psi}_U, \hat{\Omega}_U)$, we customize a beam training scheme for UAV mmWave communications and the procedures of which are presented as follows.
\\
\underline{\emph{Step 1.}} Navigation system generates estimated UAV position  $(\hat{x}, \hat{y}, \hat{z})$ and UAV attitude $(\hat{\alpha}, \hat{\beta}, \hat{\gamma})$;\\
\underline{\emph{Step 2.}} With $(\hat{x}, \hat{y}, \hat{z})$ and $(\hat{\alpha}, \hat{\beta}, \hat{\gamma})$, obtain the two-dimensional AoA/AoD $(\hat{\Psi}_B, \hat{\Omega}_B)$ at BS side and $(\hat{\Psi}_U, \hat{\Omega}_U)$ at UAV side according to \eqref{BSside} and \eqref{AngleUAV};\\
\underline{\emph{Step 3.}} Generate the direction-constrained random sensing matrix ${\mathbf{M}}_U$ that is centered at $(\hat{\Psi}_U, \hat{\Omega}_U)$.\\
\underline{\emph{Step 4.}} Channel measurement, i.e., BS performs directional transmit beamforming according to $(\hat{\Psi}_B, \hat{\Omega}_B)$, and UAV sequentially applies the sensing vector $\mathbf{m}_{U,n}, n=1,\cdots, N$ to analog phase shifters to collect channel measurements $\mathbf{y}$. \\
\underline{\emph{Step 5.}} With $\mathbf{y}$ and ${\mathbf{M}}_U$, perform MLE based two-dimensional AoA/AoD estimation to yield $(\tilde{\Psi}_U, \tilde{\Omega}_U)$.

 For illustrative purposes, the design flow is also shown in \figref{Flow}.  Since Step 1 \& 2 have already been explained, in the following subsections, we will primarily introduce Step 3-5.

\begin{figure}[tp]{
\begin{center}{\includegraphics[width=8cm ]{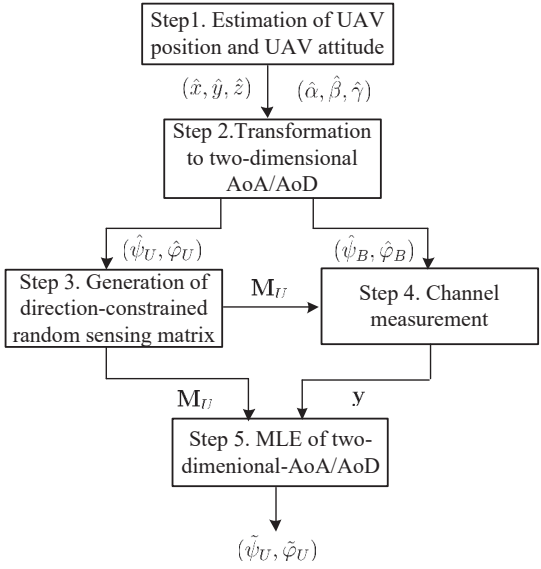}}
\caption{Flow chart of UAV beam training}\label{Flow}
\end{center}}
\end{figure}

\subsection{CS Based Beam Training with Partially Random (Direction-Constrained) Sensing Matrix}
Recall the received signal model in \eqref{sensing}. Since it has already been shown that $(\hat{\Psi}_B, \hat{\Omega}_B)$ derived from the feedback of UAV position causes negligible performance degradation, it is unnecessary to estimate $( {\Psi}_B,  {\Omega}_B)$ at beam training stage. Thus, BS can directly adopt the following transmit beamforming vector throughout the training process
\begin{align}
&\mathbf{f}_{B,n} \equiv \mathbf{f}_{B} = \frac{1}{\sqrt{N_B}}\mathbf{v}(\hat{\Psi}_B, N_{B,x}) \otimes  \mathbf{v}(\hat{\Omega}_B, N_{B,z}),  \notag \\
&\qquad \qquad \qquad \qquad \forall n \in \{ 1,\cdots, N\}
\end{align}
where $\hat{\Psi}_B, \hat{\Omega}_B $ are derived  by substituting $(\hat{x}, \hat{y}, \hat{z})^T$  into \eqref{DirectionEq} and \eqref{BSside}. While, the randomly generated receive beamforming (sensing) vector $\mathbf{m}_{B,n} \in \mathcal{M}$ changes with time $n$. By concatenating $N$ channel measurements, the vector-form received signal is given by
\begin{align}
\mathbf{y}  & = \mathbf{M}_{U}^H \mathbf{H} \mathbf{f}_{B} + \mathbf{w} \notag \\
& = \bar{\tau} \cdot \mathbf{M}_{U}^H \underbrace{\Big(  \mathbf{v}(\Psi_U, N_{U,x}) \otimes  \mathbf{v}(\Omega_U, N_{U,y}) \Big)}_{\mathbf{b}(\Psi_U, \Omega_U)} + \mathbf{w}
\end{align}
where the direction-constrained random sensing  matrix at UAV side is $\mathbf{M}_{U} = [\mathbf{m}_{U,1}, \cdots, \mathbf{m}_{U,N} ]$ and the generation method of which will be introduced in the next subsection. In addition,  $\mathbf{w}  = [ \mathbf{m}_{U,1}^H \bar{\mathbf{w}}_1, \cdots,  \mathbf{m}_{U,N}^H \bar{\mathbf{w}}_N]^T \sim \mathcal{CN}( \mathbf{0}_{N\times 1}, \sigma^2 \mathbf{I}_{N\times N}) $ is the equivalent complex additive Gaussian white noise, and the coefficient
\begin{align}
 \bar{\tau} =& \frac{\lambda e^{j\left( \frac{-2\pi  d_{BU}  } {\lambda } + \frac{ \pi(\Psi_B - \hat{\Psi}_B)(N_{B,x}-1)}{2} + \frac{ \pi(\Omega_B - \hat{\Omega}_B)(N_{B,z}-1)}{2}  \right)  }}{4\pi d_{BU}\sqrt{N_{B,x}N_{B,z}}}  \cdot \notag \\
&    \left(\frac{ \sin ( \frac{\pi   (\Psi_B - \hat{\Psi}_B) N_{B,x}}{2})  }{\sin ( \frac{\pi (\Psi_B - \hat{\Psi}_B)}{2}  )  }\right) \left(\frac{ \sin ( \frac{\pi   (\Omega_B - \hat{\Omega}_B)  N_{B,z}}{2})  }{\sin ( \frac{\pi  (\Omega_B - \hat{\Omega}_B)}{2}  )  }\right) \label{EqChannel}
\end{align}
where the sinc functions (second and third term of \eqref{EqChannel}) are related to the power loss caused by  the estimation error of $(\hat{\Psi}_B, \hat{\Omega}_B)$, which is already shown to be negligible in \figref{Pathloss}.

With the measurement signal vector $\mathbf{y}$, the two-dimensional AoA (as it is in downlink transmission) at UAV side can be obtained through maximum likelihood estimation (MLE), and the MLE problem is equivalently written as \cite{wang2020joint}
\begin{align}
 &P1: \max_{\Psi_U, \Omega_U}  \left\| \frac{\mathbf{b}(\Psi_U, \Omega_U)^H \mathbf{M}_U}{ \left\|  \mathbf{M}_U^H\mathbf{b}(\Psi_U, \Omega_U) \right\|_2}\mathbf{y} \right\|_2^2 \notag \\
&\;\;s.t. \; -1\leq \Psi_U < 1 \notag\\
& \;\;\;\;\;\;\;\; -1\leq \Omega_U <1 \notag
\end{align}

Similar to  \cite{wang2020joint}, P1 can be solved via the following two steps.

\noindent \textbf{Step 1. Coarse Search}

Discretize   $\Psi_U$ and $\Omega_U$  with two-dimensional finite granularity (with quantization levels $Z_{\Psi_U}$ and $Z_{\Omega_U}$) and then exhaustively search for the $N_{pk}$ largest maxima.

\noindent \textbf{Step 2. Fine Search}

For a  discrete maximum $(\check{\Psi}_U, \check{\Omega}_U)^T$, run gradient descent search starting from $( \Psi_U^{(1)},  \Omega_U^{(1)})^T =(\check{\Psi}_U, \check{\Omega}_U)^T $ as follows
\begin{align} \label{FineSearch}
 \left(
   \begin{array}{c}
     \Psi_U^{(i+1)} \\
     \Omega_U^{(i+1)} \\
   \end{array}
 \right)  =    \left(
   \begin{array}{c}
     \Psi_U^{(i)} \\
     \Omega_U^{(i)} \\
   \end{array}
 \right)    \oplus \lambda    \left(
   \begin{array}{c}
     \frac{\partial g({\Psi_U, \Omega_U})}{\partial \Psi_U} \big|_{\Psi_U = \Psi_U^{(i)}} \\
     \frac{\partial g({\Psi_U, \Omega_U})}{\partial \Omega_U} \big|_{\Omega_U = \Omega_U^{(i)}} \\
   \end{array}
 \right)
\end{align}
where  the expressions of partial derivatives are given in Appendix A, and $\lambda$ is the preset step size. The iterative process stops when $(\Psi_U^{(i+1)} \ominus \Psi_U^{(i)} )^2 + (\Omega_U^{(i+1)} \ominus \Omega_U^{(i)} )^2 \leq \epsilon$, and $\epsilon$ is a preset parameter.

Repeat the above operations over the rest $N_{pk}-1$ maxima derived in Step 1, and select the best one as the estimated two dimensional AoA $(\tilde{\Psi}_U, \tilde{\Omega}_U)$.

\begin{remark}
The  complexity of Step 1 is $\mathcal{O}(2^2 Z_{\Psi_U} Z_{\Omega_U})$. The  complexity of Step 2 mainly arises from the computation of the gradients  $ \frac{\partial{g(\Psi_U, \Omega_U)}}{\partial{\Psi_U}}$ and $\frac{\partial{g(\Psi_U, \Omega_U)}}{\partial{\Omega_U}} $, which, according to Eq. \eqref{PsiUpartial} and Eq. \eqref{OmegaUpartial},  is $\mathcal{O}(N_{U,x}N_{U, y} N)$. Hence, the complexity of Step 2 is $\mathcal{O}(n_{iter}N_{pk}N_{U, x}N_{U, y}N)$, where the iteration number $n_{iter}$ depends on step size and stopping criterion of the gradient method and is generally less than $20$; $N_{U,x}$, $N_{U, y}$ are the element antenna numbers of UPA along x axis and y axis, and $N$ is the training length. Thus, the overall complexity is $\mathcal{O}(2^2 Z_{\Psi_U} Z_{\Omega_U} + n_{iter}N_{pk}N_{U,x}N_{U, y}N)$.
\end{remark}

\subsection{Generation of Partially Random (Direction-Constrained) Sensing Matrix}

The key part of our scheme is the direction-constrained sensing matrix $\mathbf{M}_{U}$.  As mentioned in Section IV. A, conventional generation methods of random sensing matrix will result in  omnidirectional sensing range. To exploit navigation information,  we need to design the direction-constrained  random sensing matrix $\mathbf{M}_{U} $ given $\hat{\Psi}_U, \hat{\Omega}_U $.

Without loss of generality, the sensing matrix $\mathbf{M}_U$ is constructed column by column. A traditional way to construct the sensing vector $\mathbf{m}_U$ is to set
\begin{align}
\mathbf{m}_U(n_U) = \frac{1}{\sqrt{N_U}}e^{j \pi \varphi_{n_U}},\;\; n_U = 1,2, \cdots, N_U  \notag
\end{align}
and then to generate $N_U$ i.i.d. (independent and identically distributed) random phase coefficients
\begin{align}
\varphi_{n_{U}} \sim U(-1, 1), \;\; n_U = 1,2, \cdots, N_U \notag
\end{align}
As all the $ N_U$ phase coefficients are random, we name the generated $\mathbf{M}_U$ that follows this method as \emph{fully random} sensing matrix. Due to its full randomness, $\mathbf{M}_U$ is omnidirectional with a high probability\cite{myers2020deep}.

However, beam training with an omnidirectional sensing matrix is inefficient for UAV mmWave communications,  since the navigation system of UAV is capable of providing a rough estimate of the two-dimensional AoA/AoD, which can be utilized to narrow down the search interval. To this end, we propose to design the direction-constrained (partially) random sensing matrix $\mathbf{M}_U$.

\begin{figure}[tp]{
\begin{center}{\includegraphics[width=8.5cm ]{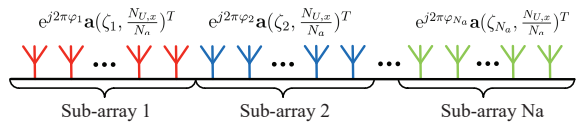}}
\caption{Visualization of sub-array partition along one dimension}\label{Subarray}
\end{center}}
\end{figure}

 \begin{figure*}[tp]{
\begin{center}{\includegraphics[width=14cm ]{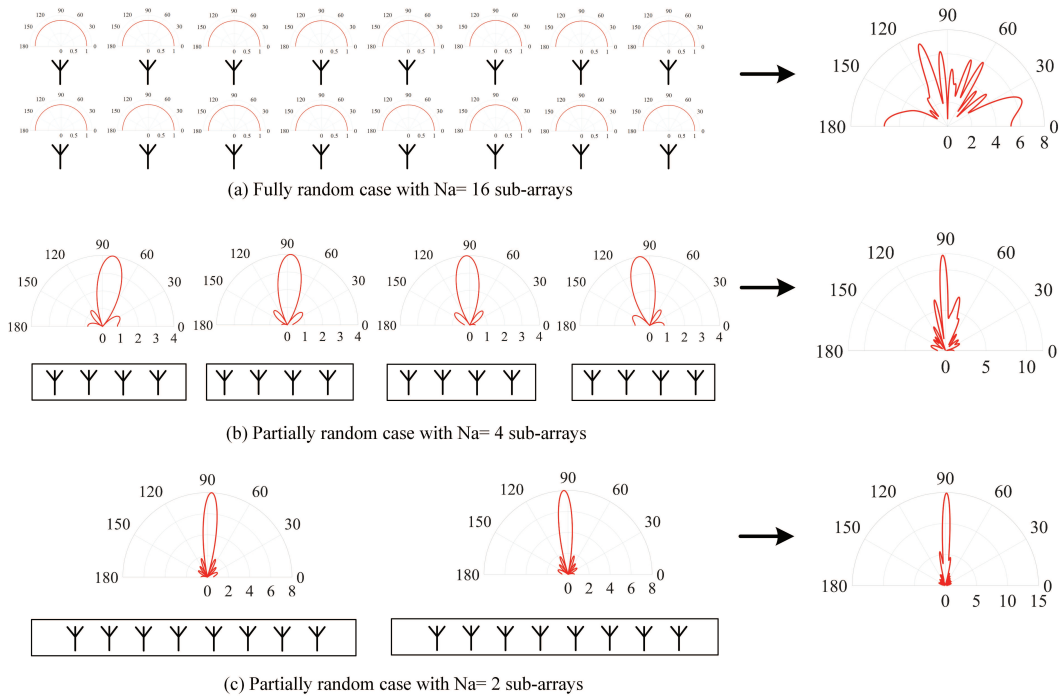}}
\caption{Visualization of sub-array based design of random sensing vector (the scale of  angle is $\cos^{-1} \Psi_U$)}\label{PatternVisual}
\end{center}}
\end{figure*}
 We decompose the sensing vector $\mathbf{m}_U$ into two dimensions ($x$ axis and $y$ axis), i.e.,
\begin{align}
\mathbf{m}_U = \mathbf{m}_{U, x} \otimes \mathbf{m}_{U, y} \label{KronDecomp}
\end{align}
Note that the UPA can be treated as $N_{U,y}$ uniform linear arrays (ULAs) placed along  $x$ axis, and their receive beamforming patterns are determined by the vector $ \mathbf{m}_{U, x}$; Similarly, the UPA can also be treated as $N_{U,x}$ ULAs placed along $y$ axis, whose receive beamforming patterns are determined by the vector $ \mathbf{m}_{U, y}$. Therefore,  the generation of $\mathbf{m}_{U}$ can be decoupled into two parallel subproblems that generate $\mathbf{m}_{U, x}$ and $\mathbf{m}_{U, y}$, separately.

Take $\mathbf{m}_{U, x}$ as an example, to manage its radiation pattern, we will sacrifice part of its randomness.  Specifically, we partition the ULA (along $x$ axis) into $N_a$ sub-arrays as shown in \figref{Subarray}, and let $\mathbf{m}_{U, x}$ follow the format of
\begin{align}
\mathbf{m}_{U,x} &= \frac{1}{\sqrt{N_{U,x}}}\left[e^{j \pi \varphi_{1}} \mathbf{v}^T(\zeta_{1}, \frac{N_{U,x}}{N_a}),\; e^{j \pi \varphi_{2}}\mathbf{v}^T(\zeta_{2}, \frac{N_{U,x}}{N_a}),\; \cdots,\;  \right. \notag\\
&\qquad \qquad \qquad \qquad \left. e^{j  \pi \beta_{N_a}}\mathbf{v}^T(\zeta_{N_a}, \frac{N_{U,x}}{N_a}) \right]^T \label{SubEqA}
\end{align}
where the term $ \mathbf{v}(\zeta_{n_a}, \frac{N_{U,x}}{N_a}), n_a = 1,\cdots, N_a$ determines the beam pattern of each sub-array.

It is noteworthy that the controllable variables are the phase coefficient $\varphi_{n_a}$ and the center angle $\zeta_{n_a}$ of each sub-array\footnotemark.

1) The phase coefficients $\varphi_{n_a}, \; (n_a = 1,2, \cdots, N_a)$ determine the collective effects of each sub-array. Similar to fully random case, we generate the i.i.d.  $\varphi_{n_a}, (n_a = 1, 2, \cdots, N_a)$ with distribution  $U(-1,1)$.

2) As for the center angle $\zeta_{n_a}$, it is related to the angle range of the sensing matrix $\mathbf{M}_{U}$.  The beam width of each sub-array with the beamforming vector $\mathbf{a}(\zeta_{n_a}, \frac{N_{U,x}}{N_a})$ is $\frac{2N_a}{N_{U,x}}$ \cite{tse2005fundamentals}, which indicates that the majority of its received signal power is within the range of $\left(-\frac{N_a}{N_{U,x}} + \zeta_{n_a}, \frac{N_a}{N_{U,x}}  + \zeta_{n_a} \right)$. To constrain the range of direction and in the meantime guarantee randomness, we generate the i.i.d. $\zeta_{n_a}, (n_a = 1, 2, \cdots, N_a)$ with the distribution  $U(\zeta_{Low}, \zeta_{Upp})$. Considering the transition bands, i.e., $\left(\zeta_{Low} - \frac{N_a}{N_{U,x}}, \zeta_{Low}\right)$ and $\left(\zeta_{Upp}, \zeta_{Upp} + \frac{N_a}{N_{U,x}}\right)$, the direction range of the generated sensing matrix is thus $\left(\zeta_{Low} - \frac{N_a}{N_{U,x}}, \zeta_{Upp} + \frac{N_a}{N_{U,x}}\right)$

As part of the degree of freedom is sacrificed, the generated random sensing matrix is termed as  \emph{partially random}.

\begin{remark}
In practice, a set of  sensing matrices with different sensing ranges can be stored in advance, and the best one will be adaptively selected according to the degree of AoA/AoD fluctuation at UAV side.
\end{remark}

\subsection{Generalization of Fully Random (Omnidirectional) and Partially Random (Direction-Constrained) Sensing Matrices}

\begin{figure*}[t]
\begin{minipage}[!h]{0.32\linewidth}
\centering
\includegraphics[ width=1.05\textwidth]{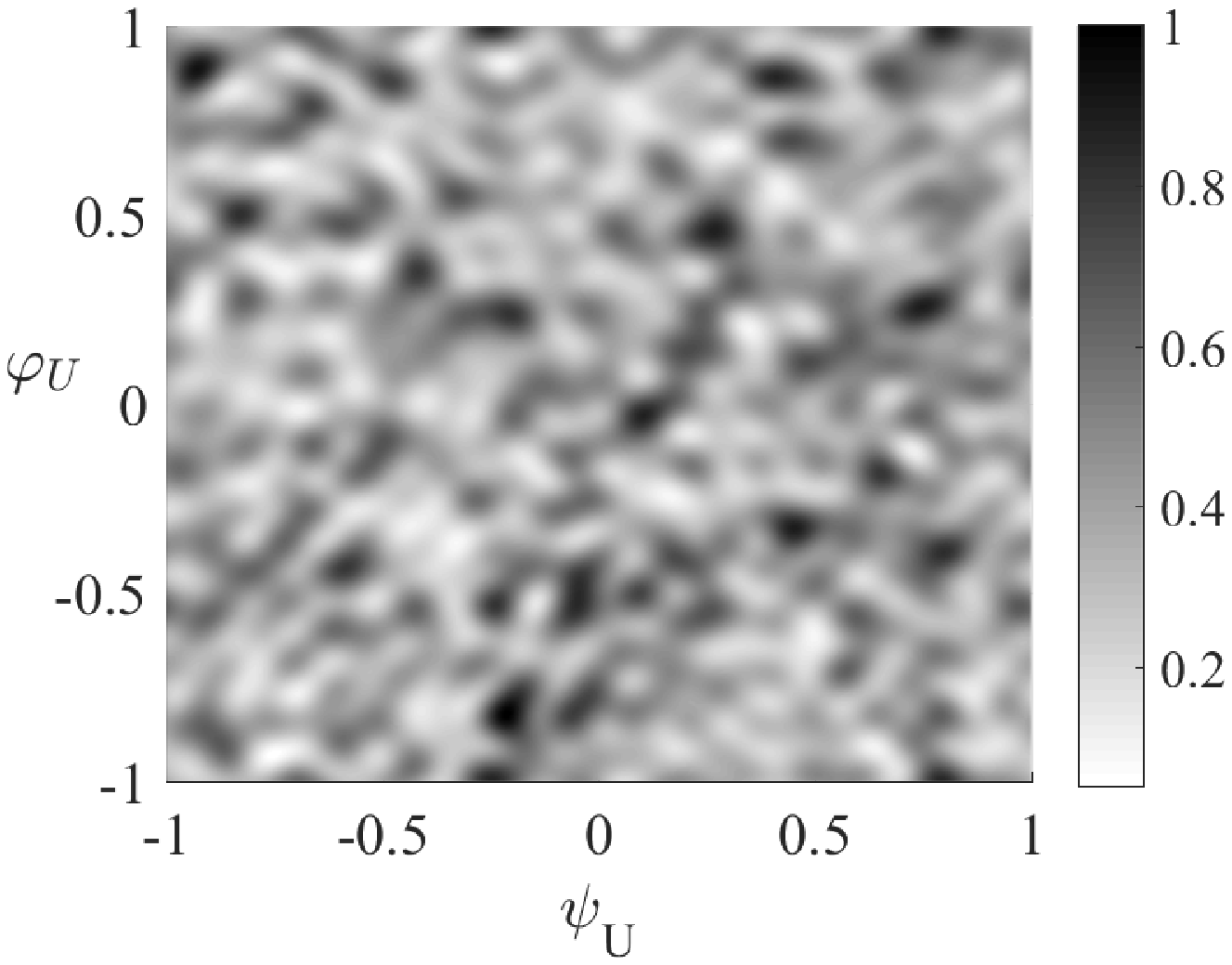}
\subcaption{\scriptsize Fully random case with $N_a=16$ sub-arrays, and the sensing range is $ \Psi_U, \Omega_U \in (-1, 1)$}
\label{Scenario1}
\end{minipage}
\begin{minipage}[!h]{0.32\linewidth}
\centering
\includegraphics[ width=1.05\textwidth]{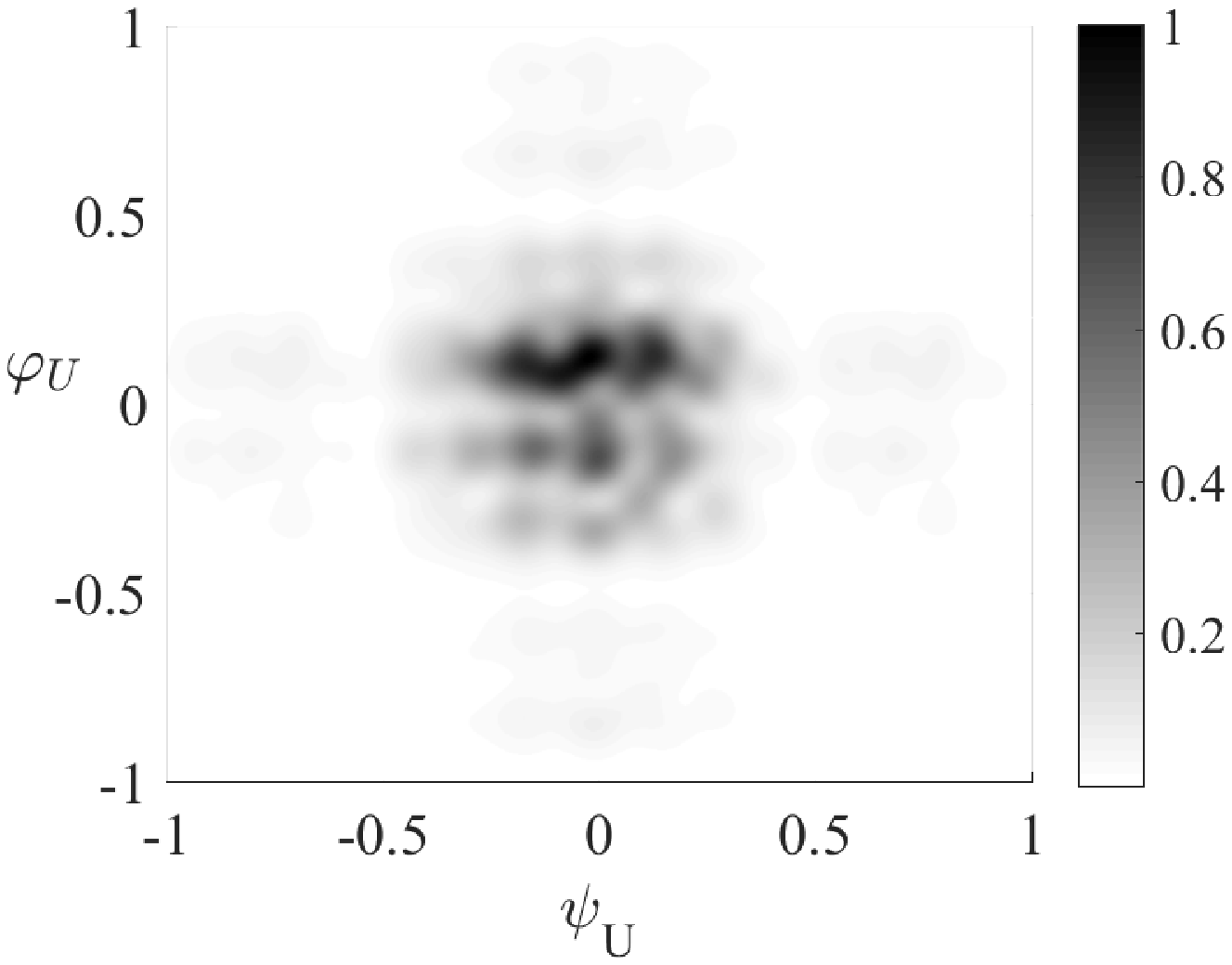}
\subcaption{\scriptsize Partially random case with $N_a = 4$ sub-arrays, and the sensing range is $\Psi_U, \Omega_U \in (-0.4, 0.4)$}
\label{Scenario2}
\end{minipage}
\begin{minipage}[!h]{0.32\linewidth}
\centering
\includegraphics[ width=1.05\textwidth]{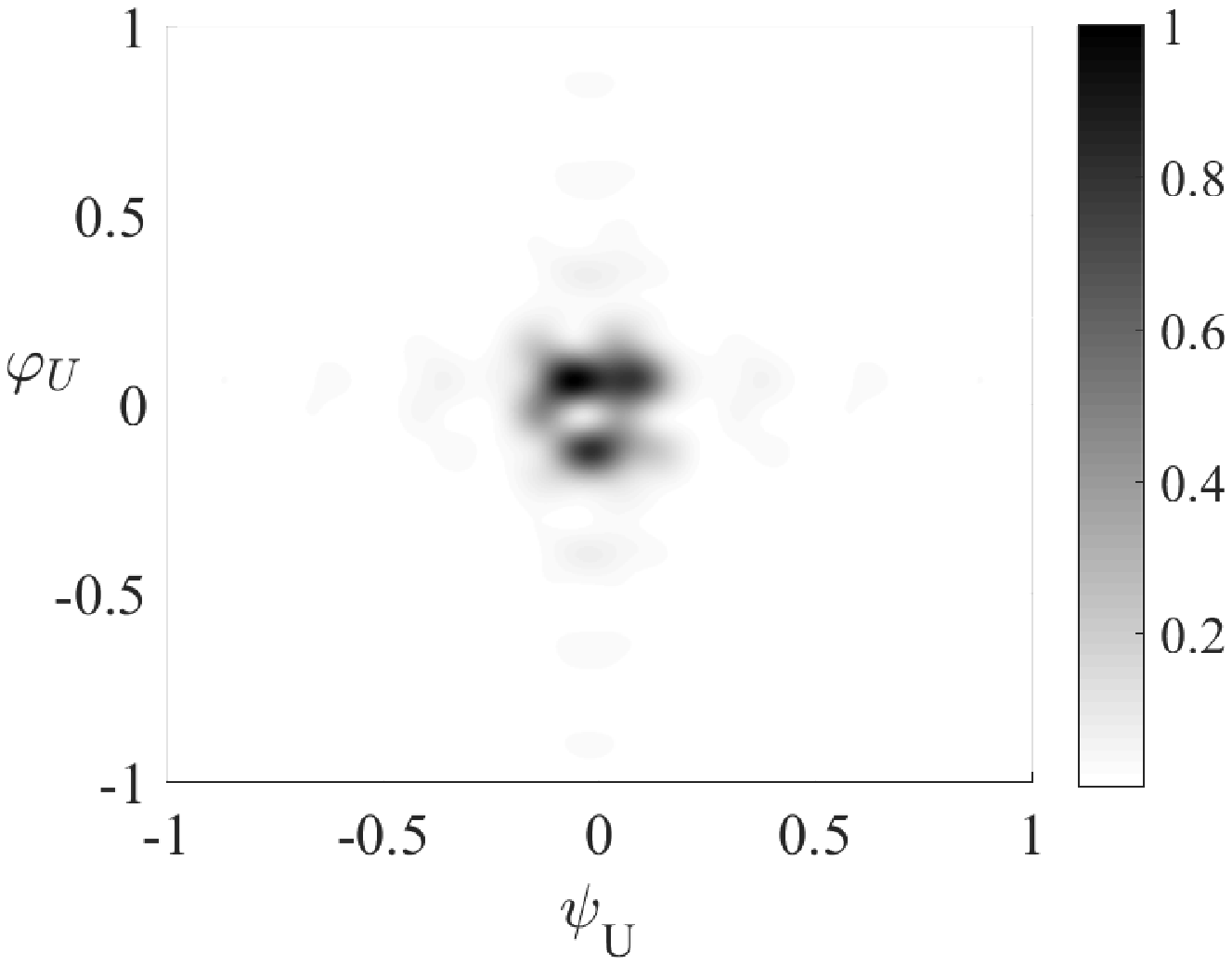}
\subcaption{\scriptsize Partially random case with $N_a=2$ sub-arrays, and the sensing range is $\Psi_U, \Omega_U \in (-0.225, 0.225)$}
\label{Scenario3}
\end{minipage}
\caption{Beam space of fully and partially random sensing matrices} \label{RandomMatrix}
\end{figure*}

To clarify the connection between the proposed partially random sensing matrix and the conventional fully random sensing matrix, we make a generalization in this subsection.

For a fully random sensing matrix, the receive beamforming vector that corresponds to any component ULA of the UPA, either along $x$ axis and $y$ axis, is fully random. Without loss of generality,  we denote the receive beamforming vector of a typical component ULA as
\begin{align}
\mathbf{m}_{U,x,1} = \frac{1}{\sqrt{N_{U,x}}} \left[e^{j \pi \bar{\varphi}_{1}}, e^{j \pi \bar{\varphi}_{2}}, \cdots, e^{j \pi \bar{\varphi}_{N_{U,x}}} \right]^T   \label{BFux}
\end{align}
where the phase coefficients $\bar{\varphi}_{1}, \cdots, \bar{\varphi}_{N_{U,x}} \sim U(-1,1)$ are i.i.d.  In accordance with \eqref{SubEqA}, $\mathbf{m}_{U,x,1}$ can also be written as
\begin{align}
\mathbf{m}_{U,x,1} &= \frac{1}{\sqrt{N_{U,x}}}\left[e^{j \pi \bar\varphi_{1}} \mathbf{v}(\bar\zeta_{1}, 1)^T,\; e^{j \pi \bar\varphi_{2}}\mathbf{v}(\bar\zeta_{2}, 1)^T,  \; \cdots,\; \right. \notag\\
&\qquad \qquad \qquad \qquad \left. e^{j  \pi \bar\varphi_{N_{U,x}}}\mathbf{v}(\bar\zeta_{N_{U,x}}, 1)^T \right]^T  \label{BFux2}
\end{align}
where, based on \eqref{UavULA1}, $\mathbf{v}(\bar\zeta_{1}, 1) = \cdots = \mathbf{v}(\bar\zeta_{N_{U,x}}, 1) \equiv 1$ regardless of the center angle $\bar\zeta_{n_{U,x}}$.

Therefore, from a generalized perspective, the fully random method partitions the ULA into $N_a = N_{U,x}$ sub-arrays and each sub-array consists of only one element antenna,  whose receive beamforming pattern is omnidirectional \cite{tse2005fundamentals}. To obtain further insights into the proposed random matrix generation method, we plot the beam patterns of each component sub-array and the constituted ULA in \figref{PatternVisual} when $N_{U,x}=16$ in the following three cases. \vspace{0.01cm}\\
(1) \underline{\emph{Fully random case with $N_a = 16$}}, in which the radiation pattern of each sub-array is omnidirectional. Thus, the beamforming vector of the constituted ULA is merely decided by $\varphi_{n_a}$ and its beamforming pattern is statistically omnidirectional. \\
(2) \underline{\emph{Partially random case with $N_a = 4$}}, in which each sub-array consists of $4$ element antennas. With the beamforming vector $\mathbf{v}(\zeta_{n_a}, 4)$, each sub-array is indeed a directional antenna whose beamwidth is $\frac{2N_a}{N_{U,x}} = 0.5$ and center angle is $\zeta_{n_a}$. The beamforming vector of the constituted ULA is  jointly decided by $\zeta_{n_a}$ and $\varphi_{n_a}$. We set $\zeta_{n_a} \sim U(-0.15, 0.15)$, thus the beamforming pattern of the constituted ULA is random within the angle range $(-0.4, 0.4)$. \\
(3) \underline{\emph{Partially random case with $N_a = 2$}}, in which each sub-array consists of $8$ element antennas. With the beamforming vector $\mathbf{v}(\zeta_{n_a}, 8)$, each sub-array is indeed  a directional antenna whose beamwidth is $\frac{2N_a}{N_{U,x}} = 0.25$ and center angle is $\zeta_{n_a}$. We set $\zeta_{n_a} \sim U(-0.1, 0.1)$, thus the beamforming pattern of the constituted ULA is random within the angle range $(-0.225, 0.225)$. \\

To visualize the sensing range of the generated $\mathbf{M}_U$, we plot its beam space in \figref{RandomMatrix}, where the grayscale is proportional to
\begin{align}
G = \frac{\left\| \mathbf{M}_U^H \mathbf{b}(\Psi_U, \Omega_U) \right\|_2^2}{\max_{\Psi_U, \Omega_U} \left\| \mathbf{M}_U^H \mathbf{b}(\Psi_U, \Omega_U) \right\|_2^2}
\end{align}
As can be seen that for (1) Fully random case with $N_a = 16$, the sensing range is $-1<\Psi_U<1, -1<\Omega_U<1$, which verifies its omnidirectional property; for (2) Partially random case with $N_a = 4$, the sensing range is $-0.4<\Psi_U<0.4, -0.4<\Omega_U<0.4$, which is in accordance with our expectation; for (3) Partially random case with $N_a = 2$, the sensing range is $-0.225<\Psi_U<0.225, -0.225<\Omega_U<0.225$ as expected.

\begin{remark}
With the rough AoA estimated by UAV navigation system, beam training with the partially random matrix strikes a balance between beamforming gain, which is essential for mmWave communications, and randomness, which is required by compressed sensing.
\end{remark}

\section{Numerical Results}

In this section, we  numerically study the performance of our proposed UAV beam training scheme.

We set the simulation parameters as follows. The speed of light is $3\times 10^8$m/s, the operating frequency is $f_c= 28$GHz, the noise power at UAV side is $-84$dBm, and UAV hovers in a hemisphere that is $200$ meters away from BS. The size of the UPA at BS side is  $N_{B,x} = N_{B,z} =  16$, the size of the UPA at UAV side is $N_{U,x} = N_{U,y}=16$. The errors of the estimated yaw angle, pitch angle, and roll angle from navigation system are i.i.d. Gaussian variables  with the mean values being $0$, and the standard deviations being $\sigma_\alpha = \sigma_\beta = \sigma_\gamma = 0.05$. The errors of UAV position coordinate is  $\mathbf{n}\sim \mathcal{N}(\mathbf{0}_{3\times 1}, \mathbf{I}_{3\times 3})$.

\begin{figure}[tp]{
\begin{center}{\includegraphics[width=8cm ]{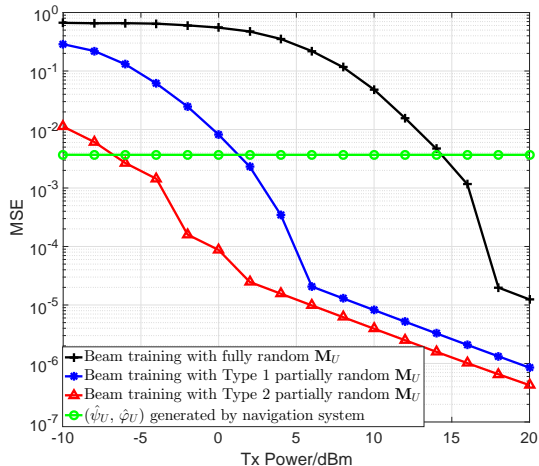}}
\caption{Estimation accuracy of the MSE of the two dimensional AoA at UAV side when training length $N=6$}\label{MSEcomp}
\end{center}}
\end{figure}

In \figref{MSEcomp}, we study the mean squared error (MSE) performance of the estimated two-dimensional AoA at UAV side, which fundamentally determines the power gain in the subsequent beamforming operation. We set the training length as $N=6$. The four methods in this simulation include beam training with fully random (omnidirectional) sensing matrix and two partially random (direction-constrained) sensing matrices, whose sensing matrices are generated in the same way as Section III.C.  Beam training with fully random (omnidirectional) sensing matrix is a conventional approach that merely depends on channel sounding/measurement and does not utilize navigation information. It is used as a benchmark method in the numerical study. In addition,  the two-dimensional AoA estimated by UAV navigation system, i.e., $(\hat\Psi_U, \hat\Omega_U)$ is used as another benchmark. The performance metric is defined as
\begin{align}
MSE = \mathbb{E}\left((\tilde{\Psi}_U \ominus \Psi_U)^2 + (\tilde\Omega_U \ominus \Omega_U)^2\right),   \notag
\end{align}
which is the MSE of the estimated two-dimensional AoA at UAV side.  The reason that we use ``$\ominus$" operation instead of ``$-$" operation is  that $\left| \mathbf{v}^H(\tilde{\Psi}_U, N_{U,x}) \mathbf{v}({\Psi}_U, N_{U,x}) \right| = \left| \frac{ \sin ( \frac{\pi   (\Psi_U - \tilde{\Psi}_U) N_{U,x}}{2})  }{\sin ( \frac{\pi (\Psi_U - \tilde{\Psi}_U)}{2}  )  } \right|$ is periodic and repeats at intervals of $2$.
In \figref{MSEcomp}, $x$ axis represents transmit power at beam training stage. Thus, MSE of  navigation based method,  which bypasses beam training operation, does not change with $x$.
From the figure, we can see that beam training with partially random (direction-constrained) sensing matrix 2, whose sensing range is $\Psi_U \in (-0.225 + \hat{\Psi}_U, 0.225 + \hat{\Psi}_U), \Omega_U \in (-0.225 + \hat{\Omega}_U, 0.225 + \hat{\Omega}_U)$, achieves the best performance. It becomes better than  navigation based method from $-6$dBm. Beam training with partially random (direction-constrained) sensing matrix 1, whose sensing range is $\Psi_U \in (-0.4 + \hat{\Psi}_U, 0.4 + \hat{\Psi}_U), \Omega_U \in (-0.4 + \hat{\Omega}_U, 0.4 + \hat{\Omega}_U)$, is inferior to  navigation based method in low SNR regimes, but it becomes superior starting from $2$dBm. As for beam training with fully random (omnidirectional) sensing matrix, its MSE outperforms  navigation based method from $16$dBm. The simulation results indicate that by appropriately integrating navigation information and channel sounding/measurement, i.e., narrowing down sensing range, AoA estimation accuracy of the random beamforming based beam training for UAV mmWave communications can be significantly improved.

\begin{figure}[tp]{
\begin{center}{\includegraphics[width=8cm ]{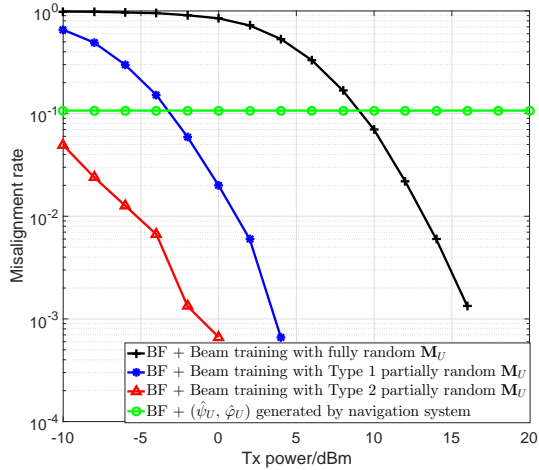}}
\caption{Misalignment rate  of UAV beamforming enabled by the proposed beam training methods when training length $N=6$}\label{Misalignment}
\end{center}}
\end{figure}

Important as it is, the accuracy of AoA/AoD is not the key performance indicator of wireless communications. To obtain further insights into  beam training's impact on the subsequent beamforming operation,  in \figref{Misalignment}, we study the misalignment rate of UAV beamforming based on the AoA/AoD estimated by UAV beam training. Owing to the directional transmission mechanism of mmWave communications, inaccurate AoA/AoD will incur a sharp decline of the received power. Without timely adjustments, e.g., beam direction adjustment, the communication link will suffer from outage. In the simulation,  misalignment rate is defined as
\begin{align}
& Pr(P< P_{TH}) = Pr\Big(20\log_{10}\left|\mathbf{m}_U^H(\tilde{\Psi}_U, \tilde{\Omega}_U)\mathbf{H}\mathbf{f}_B(\hat{\Psi}_B, \hat{\Omega}_B)\right|  \notag \\
&\qquad\qquad\qquad  <  20\log_{10}\left|\frac{\lambda \sqrt{N_{B,x}N_{B,z}N_{U,x}N_{U,y}}}{4\pi d_{BU}}\right| - 10\Big) \notag
\end{align}
where the threshold  $P_{TH}$ is $10$dB less than the maximum received power that is achieved by perfect beamforming (i.e., when $(\hat{\Psi}_B, \hat{\Omega}_B) = ({\Psi}_B, {\Omega}_B) $ and $(\tilde{\Psi}_U, \tilde{\Omega}_U) = ({\Psi}_U, {\Omega}_U) $). It can be seen from the figure that misalignment occurs in $10\%$ of channel realizations when AoA and AoD are both generated by UAV navigation system. This, together with \figref{Pathloss}, indicates that it is almost impossible to maintain a reliable mmWave link merely depending on UAV navigation system. For beam training with fully random (omnidirectional) sensing matrix, which is independent of  navigation information, its performance is inferior to navigation based method when transmit power is from $-10$dBm to $6$dBm. By contrast, beam training scheme with Type 2 partially random (direction-constrained) sensing matrix, which fuses  navigation information and channel measurements, outperforms the two aforementioned benchmark methods. Besides, beam training scheme with Type 1 partially random (direction-constrained) sensing matrix is slightly worse than beam training with Type 2 partially random (direction-constrained) sensing matrix. This is because its sensing range is redundantly wide. However, it is noted that the appropriate sensing range is proportional to the degree of UAV jittering effects. When the resulting fluctuation range of AoA becomes large, Type 1 random sensing matrix could be more suitable than Type 2 random sensing matrix.

\begin{figure*}[t]
\begin{minipage}[!h]{.5\linewidth}
\centering
\includegraphics[width=0.9\textwidth]{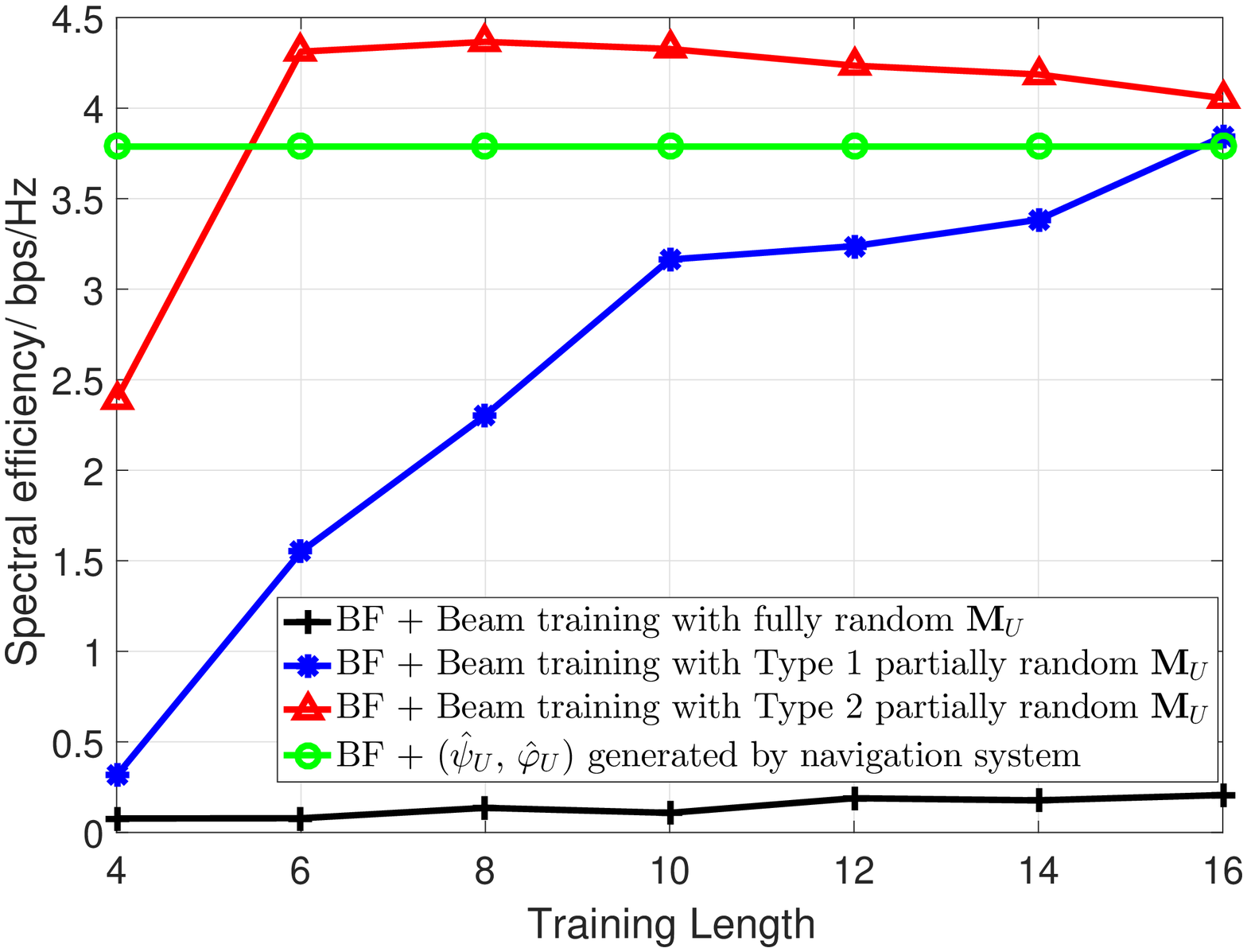}
\subcaption{  When transmit power is $-10$dBm}
\label{SpectralEfficiencySNR1}
\end{minipage}
\begin{minipage}[!h]{.5\linewidth}
\centering
\includegraphics[width=0.9\textwidth]{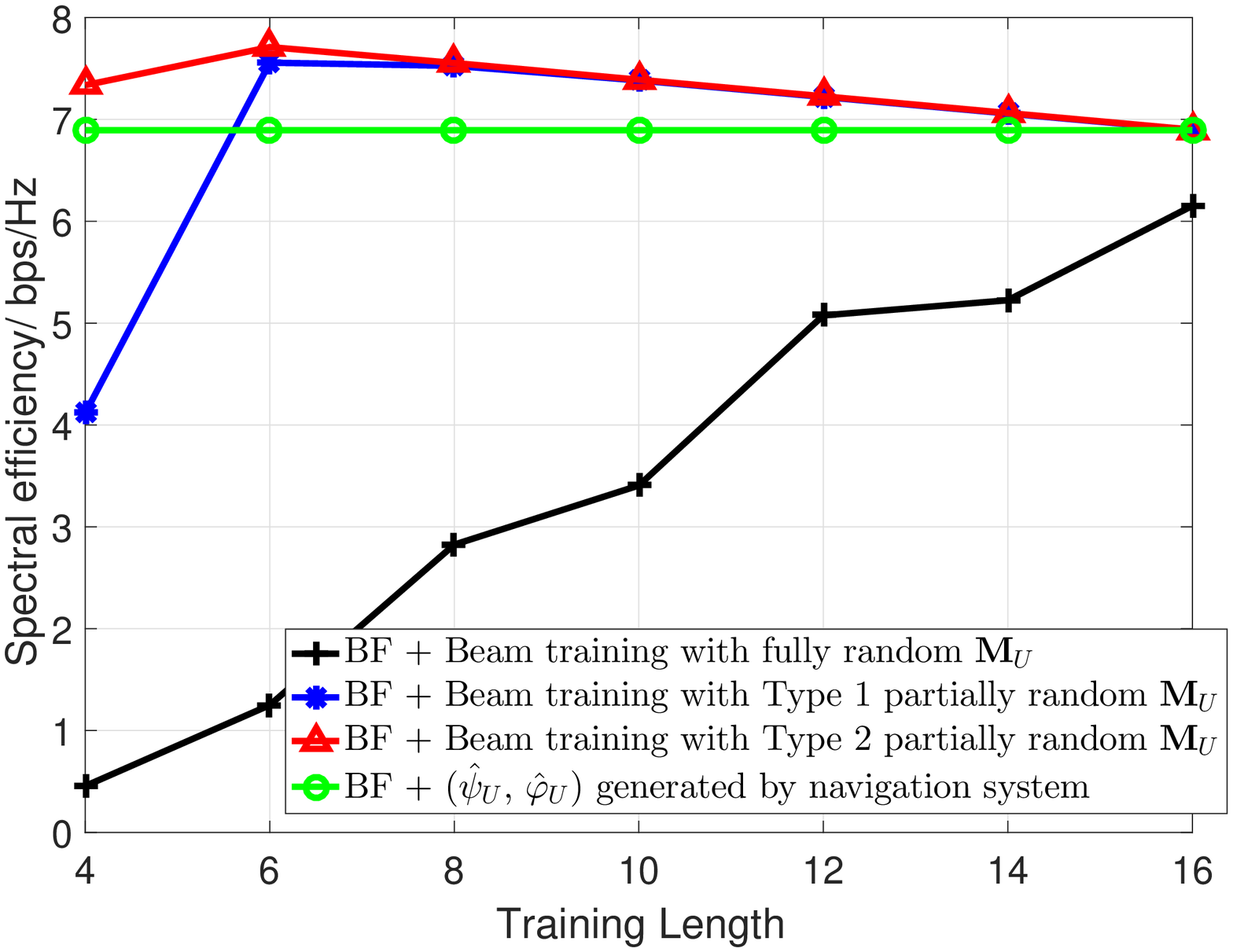}
\subcaption{  When transmit power is $0$dBm}
\label{SpectralEfficiencySNR6}
\end{minipage}
\begin{minipage}[!h]{.5\linewidth}
\centering
\includegraphics[width=0.9\textwidth]{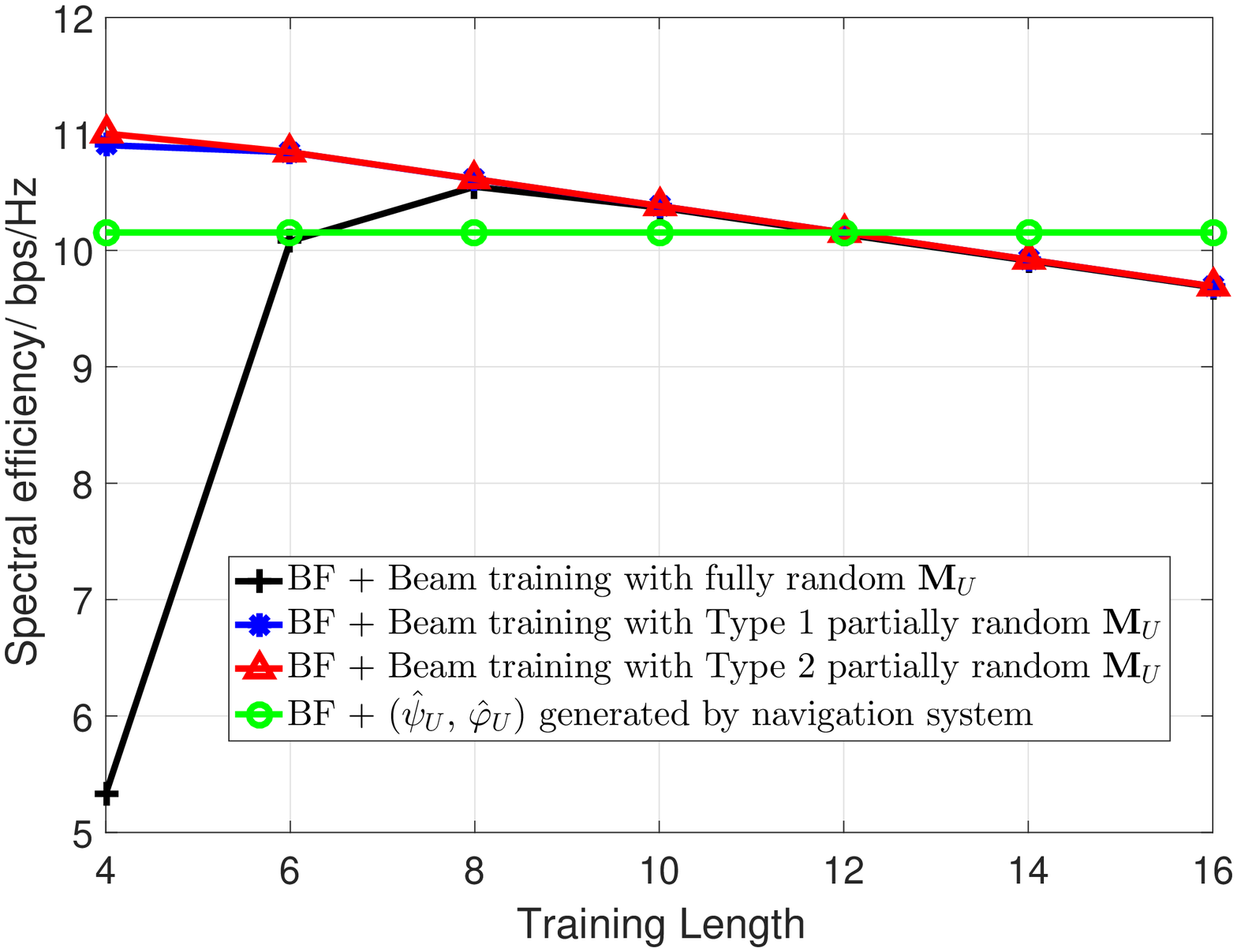}
\subcaption{  When transmit power is $10$dBm}
\label{SpectralEfficiencySNR11}
\end{minipage}
\hspace{0.1cm}
\begin{minipage}[!h]{.5\linewidth}
\centering
\includegraphics[width=0.9\textwidth]{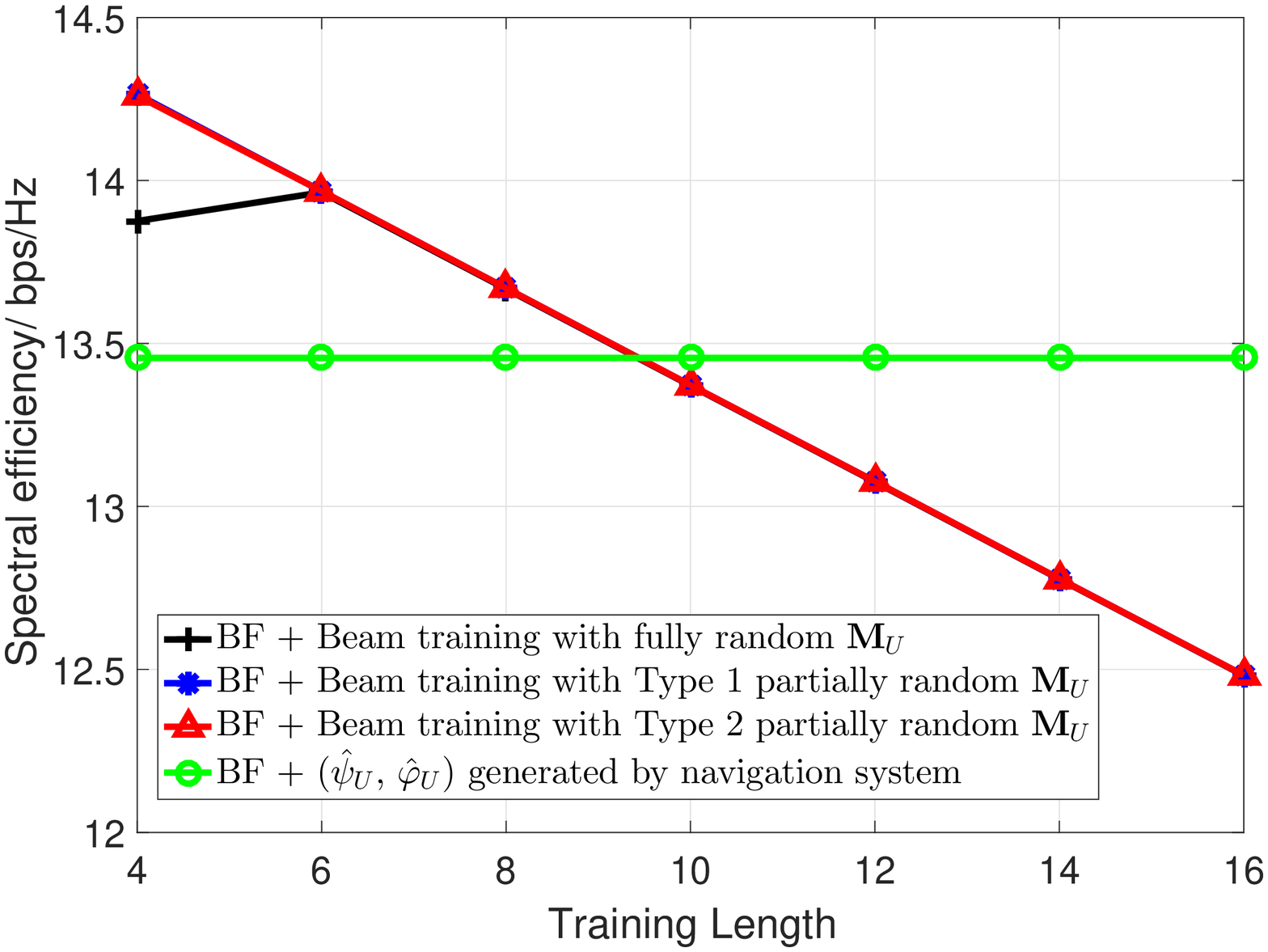}
\subcaption{  When transmit power is $20$dBm}
\label{SpectralEfficiencySNR16}
\end{minipage}
\caption{Spectral efficiency of UAV beamforming enabled by the proposed
beam training methods} \label{TrainingLen}
\end{figure*}

In \figref{TrainingLen}, we study the spectral efficiency of UAV beamforming enabled by the proposed beam training method. We assume that the coherence time of UAV mmWave channel consists of $N_{coh} = 100$ time intervals, and spectral efficiency is defined as
\begin{align}
&SE = \mathbb{E}\left({\frac{R(N_{coh} - N)}{N_{coh}}}\right)
\end{align}
where $R$ is the data rate of UAV beamforming. As beam training operation costs $N$ time intervals, the effective time for data transmission is $N_{coh} - N$.  For beam training,  the larger $N$ will on one hand bring about  better accuracy of AoA estimation  and on the other hand result in  less effective time  for data transmission. By contrast,  navigation based method does not count on channel sounding/measurement, and its effective data transmission time is thus $N_{coh}$.  In \figref{SpectralEfficiencySNR1}, when the transmit power is $-10$dBm, we plot the spectrum efficiency of different methods with the training length  $N = 4, 6, \cdots, 16$. It can be seen that when the training length is $N=8$, the spectral efficiency is maximized for beamforming enabled by beam training with Type 2 partially random sensing matrix, and with the increase of training length $N$, its spectral efficiency gradually decreases. By contrast, beamforming enabled by beam training with Type 1 partially random sensing matrix is not comparable to navigation based method and becomes almost the same only when $N = 16$;  spectral efficiency of beamforming enabled by beam training with fully random sensing matrix is very poor and improves slightly with the increase of training length $N$. It indicates that beam training with an appropriate sensing range is fast and accurate even in low SNR regimes. In \figref{SpectralEfficiencySNR6}, when the transmit power is $0$dBm,  the optimal training length for beam training with  Type 1 partially random  sensing matrix (direction-constrained) achieves almost the same performance as beam training with  Type 2 partially random (direction-constrained) sensing matrix from $N=6$ to $N=16$. Meanwhile, the spectral efficiency of  beam training with fully random (omnidirectional) sensing matrix increases significantly  with the training length $N$. It indicates that beam training with a proper sensing range is more power efficient but would lose its performance advantage when SNR increases. Our conjecture is further verified in \figref{SpectralEfficiencySNR11} and \figref{SpectralEfficiencySNR16}, where we find that beam training with  Type 1 direction-constrained (partially random) sensing matrix and  beam training with omnidirectional (fully random) sensing matrix gradually achieves the same spectral efficiency as  beam training with  Type 2 direction-constrained (partially random) sensing matrix when SNR condition becomes better.
To conclude, as the free space propagation loss in mmWave band is severely high, random beamforming based channel sounding/measurement that radiates towards all directions is inherently power inefficient. The proposed navigation information assisted beam training with constrained sensing range strikes a good balance between randomness and beamforming gain and is an accurate and fast beam training scheme for UAV mmWave communications.

\section{Conclusion}

In this paper, jittering effects analysis and beam training design for UAV mmWave communications with jitter are performed.  According to the geometric relationship, we model UAV mmWave channel with jitter by extracting the relationship between UAV attitude \& position and AoA \& AoD and analyze the jittering effects on UAV mmWave channel response. Then, based on the extracted relationship, we propose to obtain a rough estimate of AoA \& AoD using the information provided by UAV navigation system. Finally, we propose a direction-constrained beam training scheme to refine the estimation of AoA/AoD at UAV side through restricting channel measurement to a narrow angle range that is centered at the rough estimate of AoA/AoD. Numerical results show that our proposed UAV beam training scheme assisted by navigation information can achieve better accuracy  with reduced training length in AoA/AoD estimation and  is thus more suitable for UAV mmWave communications under jittering effects.

\begin{appendices}
\section{Expressions of the partial derivatives in \eqref{FineSearch}}
\begin{align}
&\frac{\partial{g(\Psi_U, \Omega_U)}}{\partial{\Psi_U}} =  2Re\left(\frac{ \mathbf{b}^H(\Psi_U, \Omega_U)\mathbf{M}_U \mathbf{y} \mathbf{y}^H \mathbf{M}_U^H \frac{\partial \mathbf{b}(\Psi_U, \Omega_U)}{\partial \Psi_U}}{\mathbf{b}^H(\Psi_U, \Omega_U)\mathbf{M}_U \mathbf{M}_U^H \mathbf{b}(\Psi_U, \Omega_U)} \right. \notag \\
& \qquad   -  \frac{ \mathbf{b}^H(\Psi_U, \Omega_U)\mathbf{M}_U \mathbf{y} \mathbf{y}^H \mathbf{M}_U^H  \mathbf{b}(\Psi_U, \Omega_U) }{(\mathbf{b}^H(\Psi_U, \Omega_U)\mathbf{M}_U \mathbf{M}_U^H \mathbf{b}(\Psi_U, \Omega_U))^2} \cdot \notag \\
& \qquad\qquad \left. \mathbf{b}^H(\Psi_U, \Omega_U)\mathbf{M}_U \mathbf{M}_U^H \frac{\partial \mathbf{b}(\Psi_U, \Omega_U)}{\partial \Psi_U}  \right) \label{PsiUpartial}
\end{align}
where
\begin{align}
\frac{\partial \mathbf{b}(\Psi_U, \Omega_U)}{\partial \Psi_U} =   \big(\mathbf{v}(\Psi_U, N_{U,x})\odot \boldsymbol{\zeta}_{\Psi_U}\big) \otimes  \mathbf{v}(\Omega_U, N_{U,y})
\end{align}
and $\boldsymbol{\zeta}_{\Psi_U} = [0, j\pi, \cdots , j \pi (N_{U,x} - 1)]^T$.

\begin{align}
&\frac{\partial{g(\Psi_U, \Omega_U)}}{\partial{\Omega_U}} =  2Re\left(\frac{ \mathbf{b}^H(\Psi_U, \Omega_U)\mathbf{M}_U \mathbf{y} \mathbf{y}^H \mathbf{M}_U^H \frac{\partial \mathbf{b}(\Psi_U, \Omega_U)}{\partial \Omega_U}}{\mathbf{b}^H(\Psi_U, \Omega_U)\mathbf{M}_U \mathbf{M}_U^H \mathbf{b}(\Psi_U, \Omega_U)} \right. \notag \\
& \qquad   -  \frac{ \mathbf{b}^H(\Psi_U, \Omega_U)\mathbf{M}_U \mathbf{y} \mathbf{y}^H \mathbf{M}_U^H  \mathbf{b}(\Psi_U, \Omega_U) }{(\mathbf{b}^H(\Psi_U, \Omega_U)\mathbf{M}_U \mathbf{M}_U^H \mathbf{b}(\Psi_U, \Omega_U))^2} \cdot \notag \\
& \qquad\qquad \left. \mathbf{b}^H(\Psi_U, \Omega_U)\mathbf{M}_U \mathbf{M}_U^H \frac{\partial \mathbf{b}(\Psi_U, \Omega_U)}{\partial \Omega_U}  \right) \label{OmegaUpartial}
\end{align}
where
\begin{align}
\frac{\partial \mathbf{b}(\Psi_U, \Omega_U)}{\partial \Omega_U} =   \mathbf{v}(\Psi_U, N_{U,x}) \otimes  \big(\mathbf{v}(\Omega_U, N_{U,y})\odot \boldsymbol{\zeta}_{\Omega_U}\big)
\end{align}
and $\boldsymbol{\zeta}_{\Omega_U} = [0, j\pi, \cdots , j \pi (N_{U,y} - 1)]^T$.

\end{appendices}

\bibliographystyle{IEEEtran}%

\bibliography{bibfile}

\end{document}